\documentclass[11pt,a4paper]{article}

\usepackage[utf8]{inputenc}
\usepackage[T1]{fontenc}
\usepackage{lmodern}
\usepackage{geometry}
\usepackage{amsmath,amssymb,amsthm}
\usepackage{graphicx}
\usepackage{booktabs}
\usepackage{hyperref}
\usepackage{amsthm}
\usepackage{aliascnt}
\usepackage[nameinlink,noabbrev]{cleveref}

\newtheorem{theorem}{Theorem}

\newaliascnt{lemma}{theorem}
\newtheorem{lemma}[lemma]{Lemma}
\aliascntresetthe{lemma}

\newaliascnt{proposition}{theorem}

\aliascntresetthe{proposition}

\newaliascnt{corollary}{theorem}
\newtheorem{corollary}[corollary]{Corollary}
\aliascntresetthe{corollary}

\newaliascnt{claim}{theorem}
\newtheorem{claim}[claim]{Claim}
\aliascntresetthe{claim}

\theoremstyle{definition}

\newaliascnt{definition}{theorem}
\newtheorem{definition}[definition]{Definition}
\aliascntresetthe{definition}

\newaliascnt{remark}{theorem}

\aliascntresetthe{remark}

\crefname{theorem}{theorem}{theorems}
\Crefname{theorem}{Theorem}{Theorems}

\crefname{lemma}{lemma}{lemmas}
\Crefname{lemma}{Lemma}{Lemmas}

\crefname{proposition}{proposition}{propositions}
\Crefname{proposition}{Proposition}{Propositions}

\crefname{corollary}{corollary}{corollaries}
\Crefname{corollary}{Corollary}{Corollaries}

\crefname{claim}{claim}{claims}
\Crefname{claim}{Claim}{Claims}

\crefname{definition}{definition}{definitions}
\Crefname{definition}{Definition}{Definitions}

\crefname{remark}{remark}{remarks}
\Crefname{remark}{Remark}{Remarks}

\newenvironment{claimproof}
  {\begin{proof}[Proof of Claim]}
  {\end{proof}}

\input{preamble}

\title{On the Complexity of Secluded Path Problems\footnote{A preliminary version of this paper appeared in the Proceedings of the 20th International Symposium on Parameterized and Exact Computation (IPEC 2025), Vol. 358 of Leibniz International Proceedings in Informatics (LIPIcs), pp. 4:1-4:16, Schloss Dagstuhl–Leibniz-Zentrum für Informatik,  2025~\cite{DBLP:conf/iwpec/HanakaT25}. This work is partially supported by JSPS KAKENHI Grant Numbers
JP21K17707, 
JP22H00513, 
JP25K03077, 
and JST, CRONOS, Japan Grant Number JPMJCS24K2.}}
\author{
  Tesshu Hanaka\thanks{Kyushu University. \texttt{hanaka@inf.kyushu-u.ac.jp}}
  \and
  Daisuke Tsuru\thanks{Kyushu University. \texttt{tsuru.daisuke.740@s.kyushu-u.ac.jp}}
}
\date{}

\begin{document}
\maketitle

\begin{abstract}
This paper investigates the complexity of finding secluded paths in graphs. We focus on the \textsc{Short Secluded Path} problem and a natural new variant we introduce, \textsc{Shortest Secluded Path}. Formally, given an undirected graph $G=(V, E)$, two vertices $s,t\in V$, and two integers $k,l$, the \textsc{Short Secluded Path} problem asks whether there exists an $s$-$t$ path of length at most $k$ with at most $l$ neighbors.
This problem is known to be computationally hard: it is W[1]-hard when parameterized by the path length $k$ or by cliquewidth, and para-NP-complete when parameterized by the number $l$ of neighbors. The fixed-parameter tractability is known for $k+l$ or treewidth.
In this paper, we expand the parameterized complexity landscape by designing (1) an XP algorithm parameterized by cliquewidth and (2) fixed-parameter algorithms parameterized by neighborhood diversity and twin cover number, respectively. As a byproduct, our results also yield parameterized algorithms for the classic \textsc{$s$-$t$ $k$-Path} problem under the considered parameters.
Furthermore, we introduce the \textsc{Shortest Secluded Path} problem, which seeks a shortest $s$-$t$ path with the minimum number of neighbors. In contrast to the hardness of the original problem, we reveal that this variant is solvable in polynomial time on unweighted graphs. We complete this by showing that for edge-weighted graphs, the problem becomes W[1]-hard yet remains in XP when parameterized by the shortest path distance between $s$ and $t$.
\end{abstract}

\section{Introduction}\label{sec:intro}

Path-finding problems are one of the most fundamental graph problems.
They have been well-studied extensively because of their wide range of practical applications~\cite{Dijkstra,Finding_k_paths,ShortestPath_experimental,bellman:routing}.
However, in many real-world scenarios, simply finding a shortest path in the input graph may not be sufficient. For example, in in the context of secure communication, one may seek a transmission path for sensitive information that minimizes exposure to potential eavesdroppers. Similarly, in a transportation network, one may wish to find a convoy route that avoids potential attackers. In robotics, paths with minimal external interference may be required in order to avoid high sensor noise or potential collisions. 

Motivated by these applications, the \textsc{Short Secluded Path} problem has been studied extensively~\cite{algorithmica/ChechikJPP17,mst/FominGKK17,networks/BevernFT20,ipl/LuckowF20}.
A vertex subset $U\subseteq V$ is called \emph{$l$-secluded} if it has at most $l$ neighbors.
Formally, \textsc{Short Secluded Path} is defined as follows.

\problemdef{\textsc{Short Secluded Path}} {An undirected graph $G=(V,E)$ where $|V|=n$ and $|E|=m$, two vertices $s,t\in V$, and two integers $k,l$.}
{Determine whether $G$ has an $l$-secluded $s$-$t$ path of length at most $k$.}

\Cref{fig:ex:ssp} illustrates an $l$-secluded $s$-$t$ path of length at most $k$ for $l=5$ and $k=5$.\footnote{In this paper, the length of a path is defined as the number of vertices on the path.}
Unfortunately, \textsc{Short Secluded Path} is NP-complete in general, since it is equivalent to \textsc{$s$-$t$ Hamiltonian Path} when we set $k=n$ and $l=0$.
Since \textsc{$s$-$t$ Hamiltonian Path} is NP-hard even on restricted graph classes~\cite{ita/MeloFS23,tcs/HanakaK25}, \textsc{Short Secluded Path} is also NP-hard on planar bipartite graphs of maximum degree 3, strongly chordal split graphs, and chordal bipartite graphs.

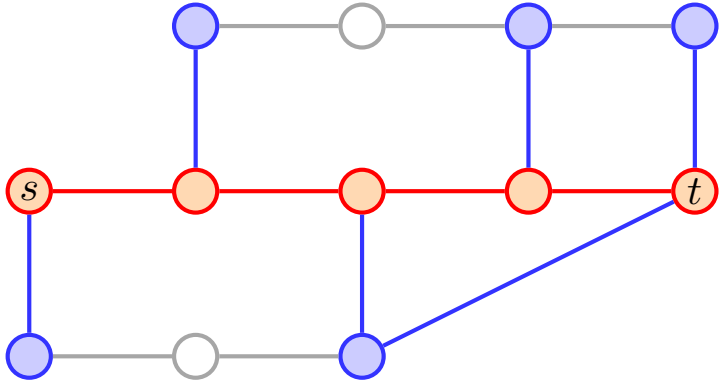
\begin{figure}[t]
    \centering
\resizebox{0.6\linewidth}{!}{%
\begin{tikzpicture}[
    node distance=1cm,
    start_end_node/.style={circle, draw=red, line width=1pt, fill=orange!30, minimum size=3mm, font=\small, text=black},
    path_node/.style={circle, draw=red, line width=1pt, fill=orange!30, minimum size=3mm},
    neighbor_node/.style={circle, draw=blue!80, line width=1pt, fill=blue!20, minimum size=3mm},
    text_node/.style={minimum size=3mm,font=\small, text=black},
    other_node/.style={circle, draw=gray!70, line width=1pt, fill=white, minimum size=3mm},
    path_edge/.style={red, line width=1pt},
    neighbor_edge/.style={blue!80, line width=1pt},
    other_edge/.style={gray!70, line width=1pt}
]

    \node[start_end_node] (s) {};
    \node[text_node] (s_text) {$s$};
    \node[path_node, right=of s] (p1) {};
    \node[path_node, right=of p1] (p2) {};
    \node[path_node, right=of p2] (p3) {};
    \node[start_end_node, right=of p3] (t) {};
    \node[text_node, right=of p3] (t_text) {$t$};

    \draw[path_edge] (s) -- (p1);
    \draw[path_edge] (p1) -- (p2);
    \draw[path_edge] (p2) -- (p3);
    \draw[path_edge] (p3) -- (t);

    \node[neighbor_node, below=of s] (n1) {};
    \node[neighbor_node, above=of p1] (n2) {};
    \node[neighbor_node, below =of p2] (n4) {}; 
    \node[neighbor_node, above=of p3] (n5) {};
    \node[neighbor_node, above=of t] (n6) {};

    \node[other_node, right= of n1] (o1) {};
    \node[other_node, above= of p2] (o2) {}; 

    \draw[neighbor_edge] (s) -- (n1);
    \draw[neighbor_edge] (p1) -- (n2);
    \draw[neighbor_edge] (p2) -- (n4);
    \draw[neighbor_edge] (p3) -- (n5);
    \draw[neighbor_edge] (t) -- (n6);
    \draw[neighbor_edge] (t) -- (n4);

    \draw[other_edge] (n1) -- (o1);
    \draw[other_edge] (n4) -- (o1);
    \draw[other_edge] (n2) -- (o2);
    \draw[other_edge] (n5) -- (o2);
    \draw[other_edge] (n5) -- (n6);
\end{tikzpicture}    
}

    \caption{An illustration of a 5-secluded $s$-$t$ path $P$ of length $5$. The $s$-$t$ path $P$ consists of the red vertices. The blue vertices are neighbors of $P$.}
    \label{fig:ex:ssp}
\end{figure}

Therefore, there has been considerable interest in its parameterized complexity.
For the natural parameters $k$ and $l$, Luckow and Fluschnik \cite{ipl/LuckowF20} show that \textsc{Short Secluded Path} is W[1]-hard when parameterized by $k$, whereas it is fixed-parameter tractable (FPT) when parameterized by $k+l$. Note that the problem is para-NP-complete when parameterized by $l$ due to the hardness of \textsc{$s$-$t$ Hamiltonian Path}.
In \cite{networks/BevernFT20}, van Bevern et al. study the fixed-parameter tractability and kernelization complexity of \textsc{Short Secluded Path} for structural parameters related to tree-like graph structures such as treewidth, vertex cover number, feedback vertex set number, and feedback edge set number.
For treewidth $\tw$, they present an FPT algorithm for \textsc{Short Secluded Path} that runs in $2^{O(\tw)} n^{O(1)}$ time. For kernelization, the problem admits a polynomial kernel when parameterized by the combination of $k, l,$ and feedback vertex set number, and also when parameterized by feedback edge set number alone. In contrast, it does not admit a polynomial kernel when parameterized by vertex cover number, by the combination of $l$ and feedback vertex set number, or by the combination of $k, l,$ and treewidth.

\subsection{Our contribution}
In this paper, we present parameterized algorithms for \textsc{Short Secluded Path} with respect to structural parameters related to dense graph classes such as clique-width, twin cover number, and neighborhood diversity. Our results lead to a more comprehensive understanding of the parameterized complexity of \textsc{Short Secluded Path} under structural graph parameters (see \Cref{fig:graph-parameters}).

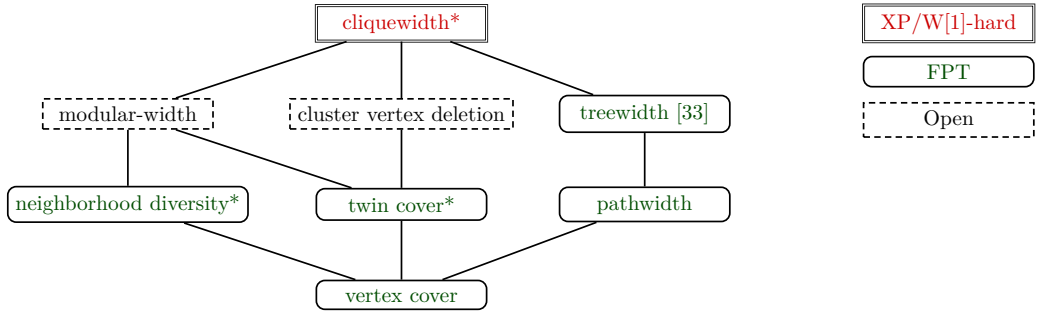
\begin{figure}[ht]
\centering
\scalebox{0.8}{ 
\begin{tikzpicture}[
  every node/.style={draw, rectangle, rounded corners, align=center, minimum width=2.8cm, font=\small},
  XP/.style={text=red!80!black,draw, rectangle,  double, sharp corners},
  FPTpoly/.style={text=blue, draw, rectangle, sharp corners},
  FPT/.style={text=green!30!black,thick},
  Open/.style={text=black,dash pattern=on 3pt off 2pt,sharp corners,thick},
]




\node[XP] (cw) at (-2,4) {cliquewidth*};
\node[FPT] (tw) at (2,2.5) {treewidth \cite{networks/BevernFT20}};
\node[FPT] (pw) at (2,1) {pathwidth};
\node[Open] (mw) at (-6.5,2.5) {modular-width};
\node[Open] (cvd) at (-2,2.5) {cluster vertex deletion};
\node[FPT] (tc) at (-2,1) {twin cover*};
\node[FPT] (nd) at (-6.5,1) {neighborhood diversity*};
\node[FPT] (vc) at (-2,-0.5) {vertex cover};



\draw [thick] (cw) -- (tw);
\draw [thick] (cw) -- (mw);
\draw [thick] (tw) -- (pw);
\draw [thick] (pw) -- (vc);
\draw [thick] (mw) -- (nd);
\draw [thick] (mw) -- (tc);
\draw [thick] (nd) -- (vc);
\draw [thick] (tc) -- (vc);
\draw [thick] (cw) -- (cvd);
\draw [thick] (cvd) -- (tc);


\node[XP] at (7,4) {\textcolor{red!80!black}{XP/W[1]-hard}};
\node[FPT] at (7,3.2) {\textcolor{green!30!black}{FPT}};
\node[Open] at (7,2.4){Open};

\end{tikzpicture}
}
\caption{Parameterized complexity of \textsc{Short Secluded Path} with respect to structural graph parameters. 
The connection between two parameters indicates that the upper parameter \( p \) is bounded by some computable function \( f(\cdot) \) of the lower parameter \( q \); that is, \( p \leq f(q) \). 
Parameters marked with an asterisk (\(*\)) represent our contributions in this paper. 
Double-bordered rectangles indicate that the parameterized problem belongs to XP but is W[1]-hard; 
rounded rectangles indicate that the parameterized problem is fixed-parameter tractable (FPT);
and dotted rectangles represent cases that remain open.
}

\label{fig:graph-parameters}
\end{figure}

First, we propose an XP algorithm when parameterized by clique-width.
Since \textsc{$s$-$t$ Hamiltonian Path} is known to be W[1]-hard when parameterized by clique-width~\cite{siamcomp/FominGLS10,tcs/HanakaK25}, this XP algorithm provides a tight complexity bound for this parameter.
Furthermore, we design a $\nd(G)^{O(\nd(G)^2)}n^{O(1)}$-time algorithm and a $2^{O(\tc(G)^2)}n^{O(1)}$-time algorithm for \textsc{Short Secluded Path} when parameterized by neighborhood diversity $\nd(G)$ and twin cover number $\tc(G)$, respectively. 

Here, it is worth mentioning that our algorithms are designed for
\textsc{Secluded $k$-Path}, which asks whether $G$ has an
$l$-secluded $s$-$t$ path of length \emph{exactly} $k$.
By setting $l=n$, our algorithms also yield parameterized algorithms
for the classical problem \textsc{$s$-$t$ $k$-Path} under the considered parameters.
To the best of our knowledge, the parameterized complexity of
\textsc{$s$-$t$ $k$-Path} with respect to clique-width,
twin cover number, and neighborhood diversity has not been
studied explicitly before.

Then we consider the \textsc{Shortest Secluded Path} problem.
Unlike \textsc{Short Secluded Path}, the goal of this problem is to find a shortest $s$-$t$ path whose seclusion is minimum among all shortest $s$-$t$ paths. Formally, \textsc{Shortest Secluded Path} is defined as follows.

\problemdef{\textsc{Shortest} Secluded Path} {An undirected graph $G=(V,E)$ where $|V|=n$ and $|E|=m$, and two vertices $s,t\in V$.}
{Find a shortest $s$-$t$ path $P$ that minimizes the size of its open neighborhood $|N(V(P))|$ in $G$.}

Interestingly, we show that \textsc{Shortest Secluded Path} is solvable in polynomial time on unweighted graphs, in contrast to the hardness of \textsc{Short Secluded Path}.
However, for positive integer edge-weighted graphs, the problem becomes W[1]-hard when parameterized by the shortest-path distance $d$ between $s$ and $t$. To complement this, we finally present an XP algorithm when parameterized by $d$.

\subsection{Related work}
There has been extensive literature on secluded subgraph problems.
This line of research begins with Chechik et al.~\cite{algorithmica/ChechikJPP17}, who first introduce the notion of \emph{seclusion} for connectivity problems on graphs. Their original formulation, which they call \emph{exposure}, is defined as the size of the closed neighborhood of a vertex subset.
In their seminal work, Chechik et al. consider two problems: \textsc{Secluded Path} and its generalization \textsc{Secluded Steiner Tree}, where the goal is to find a path or a Steiner tree that minimizes its exposure.
The authors show that while \textsc{Secluded Path} is hard to approximate, it is solvable in polynomial time on graphs with bounded degree. For \textsc{Secluded Steiner Tree}, they present a fixed-parameter algorithm parameterized by treewidth.
Subsequently, Fomin et al. show that \textsc{Secluded Steiner Tree} is also FPT when parameterized by the exposure of a solution.

A significant shift in perspective comes from van Bevern et al.~\cite{disopt/BevernFMMSS18}, who decouple the solution size from the exposure, and introduce the notion of \emph{$l$-secludedness}. Within this framework, they investigate the parameterized complexity for finding secluded versions of several fundamental graph structures, such as separators, dominating sets, $\mathcal{F}$-free vertex deletion sets, and independent sets.
Building on this work, Golovach et al.~\cite{jcss/GolovachHLM20} show that \textsc{Connected Secluded $\mathcal{F}$-Free Subgraph} is FPT when parameterized by $l$, and Donkers et al.~\cite{jcss/DonkersJK23} present a faster FPT algorithm for \textsc{Secluded Induced Tree} parameterized by $l$. Recently, the concept of seclusion has been extended further, with Mallek et al.~\cite{corr/abs-2502-06048} studying secluded subgraph problems on directed graphs.

\section{Preliminaries}\label{sec:prelim}

In this paper, we use standard graph-theoretic notations. Let $G=(V, E)$ be an undirected graph, where $V$ is the set of vertices and $E$ is the set of edges. We let $n=|V|$ and $m=|E|$. For a positive integer $n$, let $[n]:=\{1,2,\cdots,n\}$

For a vertex $v \in V$, its neighborhood is $N_G(v) = \{u \in V \mid \{v,u\} \in E\}$. For a vertex subset $U \subseteq V$, $N_G(U) = \{v \in V \setminus U \mid u\in U, \{u,v\} \in E\}$ denotes the set of neighbors of $U$. For simplicity, we sometimes use $N(v)$ and $N(U)$ instead of $N_G(v)$ and $N_G(U)$.
The subgraph induced by $U$ is denoted by $G[U]$. 

The length of a path is defined by the number of vertices on the path.
For two vertices $u,v\in V$, the shortest path distance between $u$ and $v$, denoted by $\dist(u,v)$, is defined as the number of edges (resp., the sum of the weights of edges) in a shortest $u$-$v$ path on unweighted graphs (resp., edge-weighted graphs). 
Two distinct vertices $u$ and $v$ are called \emph{twins} if $N(u)\setminus \{v\}=N(v)\setminus \{u\}$. 
In particular, twins $u$ and $v$ are called \emph{true twins} if the edge $\{u,v\}$ exists, and otherwise called \emph{false twins}.

We assume that the readers are familiar with the
basic notions of parameterized complexity~\cite{sp/CyganFKLMPPS15}. 

\subsection{Cliquewidth}

In this subsection, we define the cliquewidth of a graph $G$. The definition relies on the concept of a $k$-labeled graph~\cite{dam/CourcelleO00}.

\begin{definition}[$k$-labeled graph]
\label{def:k-labeled_graph}
Let $k$ be a positive integer.
A $k$-labeled graph is a pair $(G,\lab_G)$ of a graph $G$ and a function $\lab_G:V\rightarrow \{1,\cdots,k\}$.
\end{definition}

\begin{definition}[Cliquewidth]\label{def:clique_width}
The cliquewidth of a graph $G$, denoted by $\cw(G)$, is the minimum integer $k$ such that a $k$-labeled graph $(G,\lab_G)$ can be constructed by repeatedly applying the following operations.
    \begin{description}
    
        \item[(O1)] Add a new vertex $v$ with label $i\in [k]$.
        
        \item[(O2)] Take the disjoint union $(G\oplus H, \lab_{G\oplus H})$ of two $k$-labeled graphs $(G,\lab_G)$  and $(H,\lab_H)$, with
        \begin{equation*}
        \lab_{G\oplus H}(v)=
                \begin{cases}
                    & \lab_G (v)\ \ \text{if}\ v\in V(G),\\
                    & \lab_H (v)\ \ \text{otherwise.}
                \end{cases}
        \end{equation*}
        
        \item [(O3)] Take distinct labels $i,j\in[k]$ for a $k$-labeled graph $G$, and add an edge between every pair of vertices labeled by $i$ and by $j$.
        
        \item [(O4)] Take distinct labels $i,j\in[k]$ for a $k$-labeled graph $G$, and relabel the vertices of label $i$ to label $j$.
    \end{description}
   
\end{definition}

This construction process for a $k$-labeled graph can be represented by a rooted binary tree, called a \emph{$k$-expression tree}. Each node in this tree corresponds to one of the four operations: an \emph{introduce node} (O1), a \emph{union node} (O2), a \emph{join node} (O3), or a \emph{relabel node} (O4).
The leaves of a $k$-expression tree are always introduce nodes, and conversely, all introduce nodes are leaves. The graph associated with any node is the one constructed by the operations in its subtree, and thus the root of the tree represents the final graph $G$.

A $k$-expression tree is \emph{irredundant} if for each edge $\{u,v\} \in E(G)$, there is exactly one corresponding join node that adds this edge.
Any $k$-expression tree can be transformed into an irredundant one with $O(n)$ nodes in linear time~\cite{dam/CourcelleO00}.
Therefore, we can assume without loss of generality that any given $k$-expression tree is irredundant.

While computing the exact cliquewidth and an optimal expression tree is NP-hard, a polynomial-time approximation algorithm exists. Specifically, a $(2^{\cw(G)+1}-1)$-expression tree for a graph $G$ with cliquewidth $\cw(G)$ can be computed in $O(n^3)$ time~\cite{jct/OumS06,talg/Oum08,siamcomp/HlinenyO08}.

\subsection{Neighborhood diversity and twin cover}

A partition $\mathcal{M} = \{M_1, \ldots, M_r\}$ of $V$ is called a \emph{twin partition} of $G$ if every $M_i$ is a set of twins. We call $M_i$ a \emph{module} of $\mathcal{M}$. Then the neighborhood diversity of $G$ is defined as follows.
\begin{definition}[Neighborhood diversity]\label{def:nd} 
The \emph{neighborhood diversity} $\nd(G)$ of $G$ is the minimum number of modules among all twin partitions of $G$.
\end{definition}

We can compute the  \emph{neighborhood diversity} $\nd(G)$ of $G$ and its twin partition in linear time~\cite{Lampis12,McConnell,Tedder}. By definition, each module forms either a clique (if it contains true twins) or an independent set  (if it contains false twins).

The \emph{quotient graph} corresponding to a twin partition $\mathcal{M} = \{M_1, M_2, \ldots, M_r\}$ is the graph $Q = (\mathcal{M}, E(\mathcal{M}))$, where $E(\mathcal{M}) = \left\{ \{M_i, M_j\} \mid \exists \{u, v\} \in E,\ u \in M_i,\ v \in M_j \right\}$.
We observe that for any two modules $M_i$ and $M_j$, either there are no edges between them, or every vertex in $M_i$ is adjacent to every vertex in $M_j$.

\bigskip
A vertex set $X \subseteq V$ is called a \emph{twin cover} if, for every edge $\{u, v\} \in E$, at least one of the following holds:
(i) $u \in X$ or $v \in X$, or (ii)  $u$ and $v$ are \emph{true twins} (i.e., adjacent and have identical open neighborhoods).
The \emph{twin cover number} of $G$, denoted $\tc(G)$, is the size of a minimum twin cover of $G$.

\subsection{Integer linear programming}
In this subsection, we introduce Integer Linear Programming (ILP) and its fixed-parameter tractability. 

\begin{definition}[$p$-\textsc{Variable Integer Linear Programming Feasibility} ($p$-ILP)]\label{ILP_feasibility}
Given a matrix $A\in \mathbb{Z}^{m\times p}$ and a vector $b\in \mathbb{Z}^{m}$, $p$-\textsc{Variable Integer Linear Programming Feasibility} asks whether there exists a vector $x\in \mathbb{Z}^{p}$ satisfying $Ax\le b$.
\end{definition}
It is known that $p$-ILP is fixed-parameter tractable with respect to the number $p$ of variables.
    
\begin{theorem}[\cite{mor/Lenstra83,combinatorica/FrankT87,mor/Kannan87}]\label{thm:ILP_proposition}
$p$-\textsc{Variable Integer Linear Programming Feasibility} can be solved in $O(p^{2.5p+o(p)}\cdot L)$ time, where $L$ is the number of bits in the input.
\end{theorem}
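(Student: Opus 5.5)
The plan is to reproduce the classical Lenstra--Kannan approach: recursion on the dimension $p$, using the geometry of the feasible polytope $K=\{x\in\mathbb{R}^p : Ax\le b\}$ together with lattice basis reduction, and then invoke Frank--Tardos preprocessing to control the bit-length. The argument has three parts.

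\textbf{Preliminary reductions.} First I will use linear programming (polynomial in $L$) to decide whether $K$ is empty or lower-dimensional. In the lower-dimensional case, I pass to the affine hull, computing its integer points by Hermite normal form; this reduces to a full-dimensional instance in fewer variables. Then I compute an ellipsoidal rounding: an affine map $T$ and a ball $B(c,r)$ with
\[
B(c,r)\subseteq T(K)\subseteq B(c,\,\gamma(p)\,r),
\]
where $\gamma(p)=p^{O(1)}$ is obtained from the shallow-cut ellipsoid method (or from the L\"owner--John ellipsoid up to a polynomial factor). Under $T$ the integer lattice $\mathbb{Z}^p$ becomes a lattice $\Lambda=T\mathbb{Z}^p$.

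\textbf{Dichotomy step.} This is where the $p^{O(p)}$ bound comes from, and it is the step I expect to be the main obstacle. Using Kannan's strengthening of LLL, a Korkine--Zolotarev-type reduced basis, I find a short basis of $\Lambda$. Then one of two cases holds.
\begin{itemize}
\item[(a)] $r$ is large compared with the basis lengths. In this case rounding $c$ in the reduced basis yields a lattice point inside $B(c,r)$, so $K\cap\mathbb{Z}^p\neq\emptyset$ and we stop.
\item[(b)] $r$ is small. In this case the reduced basis gives a primitive dual direction $w$ along which the lattice width of $T(K)$ is at most $p^{O(1)}$ in Kannan's analysis. (LLL alone would only give $2^{O(p^2)}$.)
\end{itemize}
In case (b), $K$ is covered by $p^{O(1)}$ parallel lattice hyperplanes $\{w^\top x=\beta\}$. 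I recurse on each hyperplane as a $(p-1)$-dimensional ILP instance. The key estimate I need is that the branching factor is $p^{O(1)}$ at each level, actually $O(p^{2.5})$ amortized as in Kannan's bound. This gives the recurrence
\[
N(p)\le p^{2.5+o(1)}\,N(p-1),
\]
which solves to $N(p)=p^{2.5p+o(p)}$ subproblems.

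\textbf{Running time per node.} Each recursion node performs polynomial-time reductions and LP solving. To bound the total cost by $O(p^{2.5p+o(p)}\cdot L)$, i.e.\ linear in $L$, I will apply Frank--Tardos simultaneous Diophantine approximation. This replaces $A$ and $b$ by an equivalent system whose entries have bit-size polynomial in $p$ alone, preserving the set of integer solutions, in time polynomial in $L$. After this, all per-node work depends only on $p$, so the $L$ dependence enters only through the preprocessing, which is linear in the input for fixed $p$ (the number of constraints is absorbed into $L$).

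Summing over the $p^{2.5p+o(p)}$ nodes of the recursion gives the claimed running time.
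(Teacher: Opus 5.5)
The paper does not prove this statement; it quotes it from Lenstra, Kannan and Frank--Tardos. In the form used in the parameterized literature (Fellows et al.; Cygan et al.), the bound $O(p^{2.5p+o(p)}\cdot L)$ counts \emph{arithmetic operations} on numbers of size polynomial in $L$, and that is all Lemma~\ref{lem:st-k-path-fpt} needs. Your Lenstra--Kannan outline is the right framework, but the step meant to produce the linear factor $L$ fails.

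You claim Frank--Tardos yields an equivalent system with the same integer solutions and entries of bit-size polynomial in $p$ alone. No such transformation exists. For $p=1$, the system $x_1\le\beta$, $-x_1\le-\beta$ has the single integer solution $\beta$. Any integer system with exactly that integer solution set defines an interval of length less than $2$ containing $\beta$, whose endpoints are quotients of entries, so some entry has absolute value at least $|\beta|-1$.

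What Frank--Tardos actually guarantees is $\mathrm{sign}(\bar w^\top z)=\mathrm{sign}(w^\top z)$ only for integer $z$ with $\|z\|_1<N$, with $\|\bar w\|_\infty\le 2^{O(p^3)}N^{O(p^2)}$. To use it you need an a priori bound $N=(p\Delta)^{O(p)}$ on some integer solution, where $\Delta$ is the largest entry. So the new entries still have size polynomial in $p$ \emph{and} $\log\Delta$. Moreover, every recursion node still processes all $m$ constraints (LP, ellipsoidal rounding), and the preprocessing itself is only known to be polynomial, not linear, in $L$. Your claim that per-node work depends only on $p$ is therefore unfounded, and with it your derivation of the bound linear in $L$. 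In the cited works, the Frank--Tardos preprocessing is what makes Kannan's algorithm run in polynomial space, and in particular what keeps the intermediate numbers polynomially bounded. It does not remove the dependence on $L$.

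The second gap is that the quantitative core of the theorem, a factor $p^{2.5}$ per variable, is asserted (``as in Kannan's bound'') rather than derived. Also, the single-direction hyperplane branching you describe is not the algorithm that attains it. Kannan uses a Korkine--Zolotarev basis to reduce an $n$-variable instance to at most $(2n)^{5i/2}$ instances in $n-i$ variables, for an $i\ge 1$ chosen adaptively by the algorithm; that counting is what yields $p^{2.5p+o(p)}$.

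If you keep hyperplane branching, you must supply two things:
\begin{itemize}
\item an explicit rounding factor $\gamma(p)$ and an explicit lattice-width bound for balls whose product is $O(p^{2.5})$;
\item an account of the per-node cost of computing a KZ-reduced basis or a shortest dual vector, which is not the polynomial-time step you assume (Kannan's enumeration costs $p^{O(p)}$ per node), together with a check that this cost does not push the total exponent above $2.5p+o(p)$.
\end{itemize}
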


\section{XP Algorithm Parameterized by Cliquewidth}\label{sec: cw}

In this section, we design an $n^{2^{O(\cw)}}$-time algorithm parameterized by cliquewidth for the problem of determining whether $G$ has an $s$-$t$ path $P$ of length $k$ with exactly $l$ neighbors. By solving this for each $k$ and $l$, \textsc{Secluded $k$-Path} and \textsc{Short Secluded Path} can also be solved in time $n^{2^{O(\cw)}}$.

\begin{theorem}{cwXP}\label{thm:cw_XP}
There is an $n^{2^{O(\cw)}}$-time algorithm that, for a graph $G$ of cliquewidth $\cw$, determines whether it contains an $s$-$t$ path of length $k$ with $l$ neighbors.
\end{theorem}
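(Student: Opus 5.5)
We plan a bottom-up dynamic program over a $c$-expression tree, with $c\le 2^{\cw+1}-1$ obtained from the approximation algorithm of Oum and Seymour and assumed irredundant. This is where the $2^{O(\cw)}$ in the exponent comes from. For a node $x$ with labeled graph $G_x$, the restriction of an $s$-$t$ path $P$ of $G$ to $G_x$ is a collection of vertex-disjoint subpaths, called \emph{path pieces}, together with the edges of $P$ present in $G_x$. Following the standard Hamiltonian-path-by-cliquewidth technique, we summarize a partial solution by a few counts. For each unordered pair of labels $\{i,j\}$, including $i=j$, we record the number of pieces whose two endpoints have labels $i$ and $j$. Single-vertex pieces are counted with $i=j$ and kept separately, since such a vertex must later receive two edges. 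We also record whether $s$ and $t$ lie in $G_x$, and if so, the label of the other endpoint of the piece containing them (or whether $s$ and $t$ are already connected), together with the total number of path vertices. Each count is at most $n$, so there are $n^{O(c^2)}$ such path signatures.

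Seclusion requires additional state, because whether a vertex outside $P$ is a neighbor of $P$ can depend on join operations higher in the tree, which act only through labels. For each label $i$ we record two bits. The first, $a_i$, states whether some path vertex currently has label $i$. The second records whether the non-path vertices of label $i$ will be "exposed" later. Concretely, we record the number of non-path vertices of label $i$ that are not yet adjacent to any path vertex in $G_x$, and separately the number already adjacent. The latter count contributes to $l$ irrevocably, so we fold it into a single running counter. For the former, we keep the count $u_i$ per label; when a later join $\eta_{i,j}$ occurs and label $j$ contains a path vertex, all $u_i$ vertices become neighbors and move to the counter. Relabeling merges counts by addition, and union adds counts. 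All of this is polynomial in $n$ per label, so the total number of states is $n^{O(c^2)}=n^{2^{O(\cw)}}$.

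The transitions are as follows. \emph{Introduce} nodes decide whether the vertex is on the path. \emph{Union} nodes convolve two tables over all pairs of states, which costs the square of the table size and is still $n^{2^{O(\cw)}}$. \emph{Relabel} nodes rename labels in every component of the state. \emph{Join} nodes $\eta_{i,j}$ are the main obstacle. By irredundancy, the edges between labels $i$ and $j$ appear only here, and we must guess how many of the new $i$-$j$ edges are used by the path. A naive guess involves exponentially many choices. Instead, as in Fomin et al.'s Hamiltonian-cycle algorithm, we guess only the number $q$ of added edges, and argue that the resulting multiset of endpoint-label pairs evolves deterministically up to a bounded number of type choices. Each edge merges two pieces, one with an endpoint of label $i$ and one with an endpoint of label $j$. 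Processing the edges one at a time, each step chooses a pair of piece types among $O(c^2)$ options, so we can iterate this $q\le n$ times as an inner DP that updates the counts. We must forbid closing a cycle, that is, joining the two endpoints of the same piece. For pieces with $i$-$j$ endpoints this cannot be detected from counts alone. The standard fix is to observe that any count-feasible merge sequence can be realized without cycles, as long as the result does not consist solely of closed structures, and we will argue this by an exchange argument. The same join also resolves seclusion: if $a_i$ holds, then all $u_j$ vertices become neighbors and are added to the counter, and symmetrically for $a_j$.

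At the root, we accept if the state consists of a single piece with endpoints $s$ and $t$, the vertex count equals $k$, and the neighbor counter equals $l$. Every remaining $u_i$ vertex is a non-neighbor and is correctly ignored. Correctness follows by the usual induction: every real partial solution maps to a reachable state, and every reachable state is realized by some partial solution. The delicate part is the realizability of join merges without creating cycles.
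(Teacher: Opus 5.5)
Your dynamic program is essentially the paper's: an $n^{O(c^2)}$-size table of per-label counts, pieces counted by the unordered pair of endpoint labels, and joins handled by reachability over count vectors under single-edge merges. This is exactly \Cref{lem:joindisjointpath}. Your bookkeeping differences are harmless:
\begin{itemize}
\item you use flags for $s$ and $t$ instead of giving them two private, never-relabeled labels;
\item you keep one bit $a_i$ per label plus a single running neighbor counter, instead of the paper's per-label $x_i,y_i,z_i$;
\item you keep single-vertex pieces separate, which is unnecessary, since a lone vertex of label $\alpha$ is just an $\{\alpha,\alpha\}$-piece that stays an open endpoint after its first merge.
\end{itemize}

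The gap is in the step you flag as delicate, and your diagnosis of it is wrong. Cycle closing \emph{is} visible in the counts, as long as each step of your inner DP merges two \emph{distinct} pieces. At a join on labels $\alpha\neq\beta$, only an $\{\alpha,\beta\}$-piece can be closed on itself. So the step that merges an $\{\alpha,a\}$-piece with a $\{\beta,b\}$-piece should require $p_{\alpha\beta}\ge 2$ when both pieces are $\{\alpha,\beta\}$-pieces, and $p_{\alpha a}\ge 1$, $p_{\beta b}\ge 1$ otherwise. Pieces with equal endpoint labels are interchangeable for everything above this node. Hence every sequence obeying this rule can be realized by repeatedly choosing two actual distinct pieces of the required types and adding a new $\alpha$--$\beta$ edge between their endpoints, so only paths arise. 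Conversely, deleting the new edges of a real configuration one at a time gives such a sequence in reverse.

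No exchange argument is needed, and the one you announce is false if ``count-feasible'' allows a lone $\{\alpha,\beta\}$-piece to be merged with itself. Start from one $\{\alpha,\beta\}$-piece and one $\{a,b\}$-piece with $a,b\notin\{\alpha,\beta\}$. That move gives $p_{\alpha\beta}=0$ and $p_{ab}=1$ with an unchanged vertex count. No set of vertex-disjoint paths realizes this, even though the result is not ``solely closed structures.'' If such moves are allowed, vertices of hidden cycles can count toward $k$ at the root. The distinctness condition, which is the applicability rule in \Cref{lem:joindisjointpath}, is the ingredient your plan is missing.
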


Our algorithm follows the standard dynamic programming framework on
clique-width expression trees used for \textsc{Hamiltonian Path} by
Espelage et al.~\cite{wg/EspelageGW01}. We extend this framework
to solve \textsc{Secluded $k$-Path}.
It is known that for a graph $G$ with cliquewidth $\cw(G)$, a $(2^{\cw(G)+1}-1)$-expression tree $\mathcal{T}$ with $O(n)$ nodes can be computed in $O(n^3)$ time~\cite{jct/OumS06,talg/Oum08,siamcomp/HlinenyO08}.
For our algorithm, we must distinguish the source $s$ and target $t$. We achieve this by assigning them two new, unique labels. We can thus assume that we are given an $r$-expression tree $\mathcal{T}$ with $r \le 2^{\cw(G)+1}+1$ where $s$ is assigned label $r-1$ and $t$ is assigned label $r$. Moreover, we may assume that these two labels are never used in relabel operations. 

For a given expression tree $\mathcal{T}$, we execute dynamic programming in a bottom-up manner, from the leaves to the root.
For each node $\mu$ in $\mathcal{T}$, let $G_\mu=(V_\mu,E_\mu)$ be the labeled subgraph associated with $\mu$.
The DP table at each node $\mu$ stores boolean values for states defined by a tuple of parameters.
A state corresponds to the properties of a set $\mathcal{P}$ of vertex-disjoint paths within $G_\mu$, denoted by $\mathcal{P}_\mu$.
Intuitively, each path in $\mathcal{P}_\mu$ represents a sub-path of a potential solution in $G_{\mu}$, i.e., an $s$-$t$ path of length $k$ with $l$ neighbors.
Thus, the DP state needs to track the number of vertices of the solution, the number of their neighborhoods, the number of other vertices, and the number of sub-paths with respect to the labels of the endpoints. The information on endpoint labels of sub-paths is used when merging sub-paths at a join node.
We use the notation $V(\mathcal{P}_\mu)$ as a shorthand for $\bigcup_{P\in \mathcal{P}_\mu}V(P)$, the set of all vertices in any path in $\mathcal{P}_\mu$.
The parameters in our DP state are defined as follows:

\begin{itemize}
    \item For each label $i \in \{1, \dots, r\}$:
    \begin{itemize}
        \item $x_i^\mu$: the number of vertices in $V(\mathcal{P}_\mu)$ with label $i$.
        \item $y_i^\mu$: the number of vertices in $N(V(\mathcal{P}_\mu)) \cap V_\mu$ with label $i$.
        \item $z_i^\mu$: the number of vertices with label $i$ in $V_\mu$ that belong to neither $V(\mathcal{P}_\mu)$ nor its neighborhood.
    \end{itemize}
    
    \item For each pair of labels $i,j$ with $1\le i \le j\le r$:
    \begin{itemize}
        \item $p_{ij}^\mu$: the number of vertex-disjoint paths in $\mathcal{P}_\mu$ having one endpoint with label $i$ and the other with label $j$.
    \end{itemize}
\end{itemize}

For convenience, we also use the notation $p_{ij}^\mu$ when $i>j$; in this case,
it refers to the entry $p_{ji}^\mu$ in the DP table.
We call $(x_1^\mu,\dots,x_r^\mu)$ the solution count vector, $(y_1^\mu,\dots,y_r^\mu)$ the neighborhood count vector, $(z_1^\mu,\dots,z_r^\mu)$ the remaining vertex count vector, and  $(p_{11}^\mu,\dots,p_{rr}^\mu)$ the path count vector, respectively.
The DP entry $\mathrm{DP}_\mu[(x_1^\mu,\dots,x_r^\mu), (y_1^\mu,\dots,y_r^\mu), (z_1^\mu,\dots,z_r^\mu), (p_{11}^\mu,\dots,p_{rr}^\mu)]$ is a boolean value. It is \texttt{true} if and only if there exists a set of vertex-disjoint paths $\mathcal{P}_\mu$ in $G_\mu$ that satisfies all of the following conditions:
\begin{enumerate}
    \item For each label $i$, the number of vertices in $V(\mathcal{P}_\mu)$ with label $i$ is exactly $x_i^\mu$, i.e., $|V(\mathcal{P}_\mu)|=\sum^{r}_{i=1}x_i^\mu$.
    \item For each label $i$, the number of vertices in the neighborhood $N(V(\mathcal{P}_\mu)) \cap V_\mu$ with label $i$ is exactly $y_i^\mu$, i.e., $|N(V(\mathcal{P}_\mu))|=\sum^{r}_{i=1}y_i^\mu$.
    \item For each label $i$, the number of vertices with label $i$ in $V_\mu$ that belong to neither $V(\mathcal{P}_\mu)$ nor its neighborhood is exactly $z_i^\mu$, i.e., $|V_\mu\setminus(V(\mathcal{P}_\mu)\cup N(V(\mathcal{P}_\mu)))|=\sum^{r}_{i=1}z_i^\mu$.
    \item For each pair of labels $(i,j)$, the number of paths in $\mathcal{P}_\mu$ with endpoints labeled $i$ and $j$ is exactly $p_{ij}^\mu$.
\end{enumerate}

For convenience, if a set  $\mathcal{P}_\mu$ of paths is empty, $|V(\mathcal{P}_\mu)|=0$. In this case, its neighborhood is also empty, and this corresponds to the state where $x_i^\mu=y_i^\mu=p_{ij}^\mu=0$ for all $i,j$.

From these definitions, $0\le x_i^\mu,y_i^\mu,z_i^\mu,p_{ij}^\mu\le n$ for all parameters.
Thus, the size of the DP table at each node is bounded by $n^{O(r^2)}$.
As initialization, we set all entries of the DP tables to \texttt{false}.

The existence of a solution in the original graph $G$ is determined by checking the DP table of the root node, $\gamma$. 
A key observation is that an $s$-$t$ path of length $k$ with $l$ neighbors exists if and only if there exists a path $P$ whose endpoints are labeled by $r-1$ and $r$ in $G_\gamma$ satisfying $|V(P)|=k,|N(V(P))|=l$, and $|V_\gamma \setminus(V(P)\cup N(V(P)))|=n-k-l$. Note that $s$ and $t$ have the unique labels $r-1$ and $r$, respectively.
Therefore, an $s$-$t$ path of length $k$ with $l$ neighbors exists if and only 
if $\mathrm{DP}_{\gamma}[(x_1^{\gamma},y_1^{\gamma},z_1^{\gamma}),\cdots,(x_r^{\gamma},y_r^{\gamma},z_r^{\gamma}),(p_{11}^{\gamma},\cdots,p^{\gamma}_{rr})]=\texttt{true}$ at the root node $\gamma$ such that:
        \begin{itemize}
            \item $x_{r-1}^\gamma=x_r^\gamma=1$,
            \item $y_{r-1}^\gamma=y_r^\gamma=0$,
            \item $z_{r-1}^\gamma=z_r^\gamma=0$,
            \item $\sum_{i=1}^rx_i^\gamma=k$,
            \item $\sum_{i=1}^ry_i^\gamma=l$,
            \item $\sum_{i=1}^rz_i^\gamma=n-k-l$,
            \item $p_{(r-1)r}^\gamma=1$, and
            \item $p_{ij}^\gamma=0$ for $i,j$ except $i=r-1$ and $j=r$.
        \end{itemize}

In the following, we describe the recursive formulas for updating the DP table at each node of the expression tree.
Intuitively, our DP state tracks, for each label, the number of vertices in the solution, their neighborhoods, and the other vertices, as well as the number of vertex-disjoint paths distinguished by their endpoint labels.

For an introduce node, which adds a single labeled vertex, we only need to generate the initial states corresponding to this new vertex, forming a trivial path of length one, or being an isolated vertex not part of the solution.
For a union node, which combines two disjoint subgraphs, the new DP table is computed by merging states from its two children. As no edges are added between the subgraphs, paths from each part remain separate. Thus, for any pair of valid states (one from each child), we can compute the resulting state's parameters by simple addition.
The update for a relabel node that changes label $i$ to $j$ is also based on addition: for each state, the counts associated with label $i$ are consolidated into the counts for label $j$.
A join node adds edges between all vertices of label $i$ and all vertices of label $j$. These new edges may connect multiple existing paths into longer ones. Although the number of ways to connect paths can be large, this process can be handled in time $n^{O(r^2)}$ by adapting the update rule in the dynamic programming for \textsc{Hamiltonian Path} by Espelage et al.~\cite{wg/EspelageGW01}.
The running time for a join node is determined by iterating through all possible resulting states. Since the size of the DP table is $n^{O(r^2)}$, and we must process each state from the child's table, the update at a join node can be performed in time $n^{O(r^2)}$.
As the expression tree has $O(n)$ nodes, the total running time is dominated by the join operations, yielding $n^{O(r^2)} = n^{2^{O(\cw(G))}}$.

We describe the detailed update at each node. For simplicity, we use $*_\mu$ as a shorthand for the state tuple $[(x_1^\mu,y_1^\mu,z_1^\mu),\cdots,(x_r^\mu,y_r^\mu,z_r^\mu),(p_{11}^\mu,\cdots,p_{rr}^\mu)]$
and write $\mathrm{DP}_{\mu}[*_\mu]$ instead of $\mathrm{DP}_\mu[(x_1^\mu,y_1^\mu,z_1^\mu),\cdots,(x_r^\mu,y_r^\mu,z_r^\mu),(p_{11}^\mu,\cdots,p_{rr}^\mu)]$.

\paragraph*{Introduce node}
In an introduce node $\mu$, we suppose that a vertex $v$ with label $\alpha$ is introduced.
Thus, $G_\mu$ is a graph consisting of only $v$ with label $\alpha$.
Note that if $v$ is either $s$ or $t$, $v$ must be included in $\mathcal{P}$.
Then, $\mathrm{DP}_\mu[*_\mu]$ is \texttt{true} for the following cases.

\medskip 

\noindent \textbf{[Case: $v\in V\setminus\{s,t\}$]} 
    \begin{itemize}
    \item If a vertex $v$ is not included in $\mathcal{P}_\mu$, then $\mathcal{P}_\mu=\emptyset$. In this case $|V(\mathcal{P}_\mu)|=0$, $|N(V(\mathcal{P}_\mu))|=0$, and  $|V_\mu \setminus (V(\mathcal{P}_\mu)\cup N(V(\mathcal{P}_\mu)))|=1$.
    Thus, $\mathrm{DP}_\mu[*_\mu]=\texttt{true}$ if 
    \begin{itemize}
        \item $x_i^\mu=y_i^\mu=0$ for $i\in [r]$,
        \item $z_i^\mu=0$ for $i\in [r]\setminus \{\alpha\}$,
        \item $z_\alpha^\mu=1$,
        \item $p_{ij}^\mu=0$ for $i,j\in [r]$.
    \end{itemize}

    \item If a vertex $v$ is included in $\mathcal{P}_\mu$, $\mathcal{P}_\mu$ consists of  a path $P=\langle v\rangle$ whose endpoints are labeled by $\alpha$ in $G_\mu$, and it satisfies $|V(\mathcal{P}_\mu)|=1,|N(V(\mathcal{P}_\mu))|=0,$ and $|V_\mu \setminus(V(\mathcal{P}_\mu)\cup N(V(\mathcal{P}_\mu)))|=0$. Thus, $\mathrm{DP}_\mu[*_\mu]=\texttt{true}$ if
    \begin{itemize}
        \item  $x_i^\mu=0$ for $i\in [r]\setminus \{\alpha\}$,
        \item $x_\alpha^\mu=1$,
        \item $y_i^\mu=0,z_i^\mu=0$ for $i\in [r]$,
        \item $p_{\alpha \alpha}^\mu=1$,
        \item $p_{ij}^\mu=0$ for $i,j$ except $i=j=\alpha$.
    \end{itemize}
    \end{itemize}

\noindent  \textbf{[Case: $v\in \{s,t\}$]} 
    \begin{itemize}
        \item 
        In this case, $\mathcal{P}_\mu$ must contain $v$. Thus, $\mathrm{DP}_\mu[*_\mu]=\texttt{true}$ if
    \begin{itemize}
        \item  $x_i^\mu=0$ for $i\in [r]\setminus \{\alpha\}$,
        \item $x_\alpha^\mu=1$,
        \item $y_i^\mu=0,z_i^\mu=0$ for $i\in [r]$,
        \item $p_{\alpha \alpha}^\mu=1$,
        \item $p_{ij}^\mu=0$ for $i,j$ except $i=j=\alpha$.
    \end{itemize}
    \end{itemize}

\paragraph*{Union node}
In a union node $\mu$, let $\nu_1$ and $\nu_2$ be its child nodes.
Since $G_\mu$ is the disjoint union of $G_{\nu_1}$ and $G_{\nu_2}$, there is no edge between $V_{\nu_1}$ and $V_{\nu_2}$ in $G_\mu$.
Thus, 
$\mathrm{DP}_\mu[*_\mu]=\texttt{true}$ if and only if there exist the state tuples $*_{\nu_1}$ and $*_{\nu_2}$ such that $DP_{\nu_1}[*_{\nu_1}]=\texttt{true}$, $DP_{\nu_2}[*_{\nu_2}]=\texttt{true}$, and
\begin{itemize}
     \item  $x_i^\mu=x_i^{\nu_1}+x_i^{\nu_2}$ for $i\in [r]$,
     \item $y_i^\mu=y_i^{\nu_1}+y_i^{\nu_2}$ for $i\in [r]$,
     \item $z_i^\mu=z_i^{\nu_1}+z_i^{\nu_2}$ for $i\in [r]$,
     \item  $p_{ij}^\mu=p_{ij}^{\nu_1}+p_{ij}^{\nu_2}$ for $i,j\in [r]$.
\end{itemize}

\paragraph*{Relabel node}
In relabel node $\mu$, we relabel the vertices with label $\alpha$ to label $\beta$. 
Let $\nu$ be its child node.
Since only labels $\alpha$ and $\beta$ of $G_\nu$ are changed, $\mathrm{DP}_\mu[*_\mu]=\texttt{true}$ if and only if there exist the state tuple $*_{\nu}$  such that $DP_{\nu}[*_\nu]=\texttt{true}$ and

\begin{itemize}
    \item $x_i^\mu=x_i^\nu$ for $i\in [r]\setminus\{\alpha,\beta\}$,
    \item $x_\alpha^\mu=0$,
    \item $x_\beta^\mu=x_\beta^{\nu}+x_\alpha^{\nu}$,

    \item $y_i^\mu=y_i^\nu$ for $i\in [r]\setminus\{\alpha,\beta\}$,
    \item $y_\alpha^\mu=0$,
    \item $y_\beta^\mu=y_\beta^{\nu}+y_\alpha^{\nu}$,

    \item $z_i^\mu=z_i^\nu$ for $i\in [r]\setminus\{\alpha,\beta\}$,
    \item $z_\alpha^\mu=0$,
    \item $z_\beta^\mu=z_\beta^{\nu}+z_\alpha^{\nu}$,

    \item $p_{ij}^\mu=p_{ij}^\nu$ for $i,j\in [r]\setminus\{\alpha,\beta\}$,

    \item $p_{\alpha j}^\mu=0$ for all $j\in [r]$,

    \item $p_{\beta\beta}^\mu
    =p_{\beta\beta}^\nu+p_{\alpha\beta}^\nu+p_{\alpha\alpha}^\nu$,

    \item $p_{\beta j}^\mu
    =p_{\beta j}^\nu+p_{\alpha j}^\nu$
    for $j\in [r]\setminus\{\alpha,\beta\}$.
\end{itemize}

\paragraph*{Join node}
A join node $\mu$ with child $\nu$ corresponds to the operation of adding edges between all vertices with label $\alpha$ and all vertices with label $\beta$ to the graph $G_\nu$ to form $G_\mu$.
At this node, paths from the set $\mathcal{P}_\nu$ (a valid path configuration in $G_\nu$) can be merged into longer paths using the newly introduced edges.

Since the join operation introduces edges but does not add or remove vertices, the set of vertices on the paths does not change. Thus, $|V(\mathcal{P}_\mu)|$ is equal to $|V(\mathcal{P}_\nu)|$, which implies $x_i^\mu = x_i^\nu$ for all $i \in \{1, \dots, r\}$. Furthermore, since the new edges only involve labels $\alpha$ and $\beta$, the neighborhoods of vertices with other labels remain unchanged. This means $y_i^\mu = y_i^\nu$ and $z_i^\mu = z_i^\nu$ for all $i \in \{1, \dots, r\} \setminus \{\alpha,\beta\}$.

The update logic for the remaining parameters depends on which labels are present in the path set $\mathcal{P}_\nu$. 

\begin{itemize}
    \item \textbf{Case 1: Neither $\alpha$ nor $\beta$ vertices are included in $V(\mathcal{P}_\nu)$} ($x_\alpha^\nu = 0$ and $x_\beta^\nu = 0$).
    
    In this case, the new edges do not connect to any existing path in $\mathcal{P}_\nu$. Therefore, neither the path set nor its neighborhood changes. Thus, $y_\alpha^\mu=y_\alpha^\nu$,
         $y_\beta^\mu=y_\beta^\nu$,
         $z_\alpha^\mu=z_\alpha^\nu$,
        $z_\beta^\mu=z_\beta^\nu$, and
       $p_{ij}^\mu=p_{ij}^\nu$ for $i,j\in [r]$ hold.

    \item \textbf{Case 2: Only one of the labels appears on paths in $\mathcal{P}_\nu$} (e.g., $x_\alpha^\nu > 0$ and $x_\beta^\nu = 0$).
    
    In this case, every vertex with label $\beta$ (which is not included in $\mathcal{P}_\nu$) becomes adjacent to every vertex with label $\alpha$. This means all vertices that were previously non-adjacent to the solution (the $z_\beta^\nu$ vertices) now become neighbors. This yields $y^\mu_\beta = y^\nu_\beta+z^\nu_\beta$ and $z^\mu_\beta = 0$. Since no vertices with label $\beta$ are included in $\mathcal{P}_\nu$, no new paths are formed by connecting existing ones. Thus, the number of paths for endpoint labels remains the same: $p_{ij}^\mu=p_{ij}^\nu$ for all $i,j$. The case where $x_\beta^\nu > 0$ and $x_\alpha^\nu = 0$ is symmetric.

    \item \textbf{Case 3: Both $\alpha$ and $\beta$ vertices are on paths in $\mathcal{P}_\nu$} ($x_\alpha^\nu \ge 1$ and $x_\beta^\nu \ge 1$).
    
    In this case, all vertices with labels $\alpha$ or $\beta$ that are not already included in $V(\mathcal{P}_\nu)\cup N(V(\mathcal{P}_\nu))$ become neighbors of $V(\mathcal{P}_\nu)$. This yields $y^\mu_\alpha=y^\nu_\alpha+z^\nu_\alpha$, $y^\mu_\beta=y^\nu_\beta+z^\nu_\beta$, and $z^\mu_\alpha = z^\mu_\beta = 0$.
    It remains to update the path count vector $(p_{ij}^\mu)$. Since the newly introduced edges may merge several paths in $\mathcal{P}_\nu$ into longer paths, we must enumerate all possible path count vectors that can be obtained from the path count vector $(p_{ij}^\nu)$ after the join operation. This can be done in time $n^{O(r^2)}$ by \Cref{lem:joindisjointpath}.
\end{itemize}

\Cref{lem:joindisjointpath} is based on the update rule at a join node in the dynamic programming for \textsc{Hamiltonian Path} by Espelage et al.~\cite{wg/EspelageGW01}, and we state the corresponding argument explicitly for completeness.
For a path count vector $(p_{11}^\nu,\dots,p_{rr}^\nu)$ at the child node $\nu$,
we say that a path count vector $(p_{11}^\mu,\dots,p_{rr}^\mu)$ is \emph{feasible} for the join node $\mu$ with respect to $(p_{11}^\nu,\dots,p_{rr}^\nu)$
if there exists a set of vertex-disjoint paths in $G_\mu$ obtained from a set of vertex-disjoint paths represented by $(p_{11}^\nu,\dots,p_{rr}^\nu)$
using only edges added by the join operation, whose path count vector is $(p_{11}^\mu,\dots,p_{rr}^\mu)$.

\begin{lemma}\label{lem:joindisjointpath}
Given a path count vector $(p_{11}^\nu,\dots,p_{rr}^\nu)$ at node $\nu$,
all feasible path count vectors for the join node $\mu$ with respect to $(p_{11}^\nu,\dots,p_{rr}^\nu)$
can be enumerated in time $n^{O(r^2)}$.
\end{lemma}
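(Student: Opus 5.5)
The approach is to reduce the effect of the join on paths to a purely combinatorial question about path endpoints. The join adds the edges between the $\alpha$- and $\beta$-labelled vertices, and a new path in $G_\mu$ is a concatenation of old paths of $\mathcal{P}_\nu$ joined by such edges. Each new edge links an endpoint labelled $\alpha$ of one old path to an endpoint labelled $\beta$ of another old path. Edges added at this join between two vertices of the same old path would create a cycle or a chord, and we may discard them: only edges between endpoints matter, and a path cannot be closed into a cycle. Since the expression tree is irredundant, interior vertices are irrelevant to connectivity. Hence a resulting configuration is determined by a matching $M$ between ``$\alpha$-endpoint slots'' and ``$\beta$-endpoint slots'' of distinct old paths, such that the union of the old paths and $M$ is acyclic. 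The new path count vector depends only on how the old paths, grouped by their endpoint-label types $\{i,j\}$, are chained together.

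The key step is to describe each new path by its type sequence rather than by the individual paths. A new path is an alternating chain $P_1 e_1 P_2 e_2 \cdots P_q$, where each junction pairs an $\alpha$-end with a $\beta$-end. Every interior old path $P_2,\dots,P_{q-1}$ therefore has type $\{\alpha,\beta\}$, $\{\alpha,\alpha\}$ or $\{\beta,\beta\}$, used in an alternating fashion. Only the two outer paths $P_1$ and $P_q$ may have an arbitrary free end label. A merged path with final endpoint labels $i,j$ is thus specified by two things: the types of its two outer old paths, and the numbers $a,b,c$ of interior old paths of types $\{\alpha\alpha\},\{\beta\beta\},\{\alpha\beta\}$. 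These numbers must satisfy the balance condition that the count of $\alpha$-ends used equals the count of $\beta$-ends used, and any such balanced multiset can be arranged alternately. The outer paths supply one junction end each; a path of type $\{\alpha,\beta\}$ is special because it may be used in either orientation.

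The enumeration then goes as follows. Guess, for every ordered pair of outer types $(\{i,\alpha\text{ or }\beta\},\{j,\alpha\text{ or }\beta\})$ and each choice of junction labels, the number $m$ of merged paths of that kind. There are $O(r^2)$ kinds, each count lies in $[0,n]$, giving $n^{O(r^2)}$ guesses. Because interior paths only involve the three types $\{\alpha\alpha\},\{\beta\beta\},\{\alpha\beta\}$, we then check whether the consumed counts of old paths of each type are at most $p^\nu$. We also check whether the leftover interior $\{\alpha\alpha\}$ and $\{\beta\beta\}$ paths can be absorbed by the guessed merged paths with the $\alpha$/$\beta$ balance maintained. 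This check is a small arithmetic condition: $\{\alpha\beta\}$ paths are absorbed freely into any chain with at least one junction, and $\{\alpha\alpha\}$ and $\{\beta\beta\}$ paths must be inserted in pairs or offset by the outer junction labels. For each consistent guess, the output vector is $p^\mu_{ij}=p^\nu_{ij}-(\text{consumed})+(\text{created})$. All feasible vectors are produced in $n^{O(r^2)}\cdot\mathrm{poly}(n)$ time.

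I expect the main obstacle to be the exact characterization of which multisets of old paths can be chained into one path. The subtle points are the $\{\alpha\alpha\}$/$\{\beta\beta\}$ balance, and degenerate cases: $\alpha=\beta$ does not occur since the labels are distinct, but a single old path of type $\{\alpha,\beta\}$ may end up as an outer path of its chain. To tame this I will first prove a normal-form claim. The claim is that any feasible configuration can be rearranged, without changing the resulting path count vector, so that all interior paths of each merged chain are guessed only through aggregate counts. This should follow from an exchange argument: interior segments of types $\{\alpha\alpha\}$, $\{\beta\beta\}$ and $\{\alpha\beta\}$ can be moved between chains without changing endpoint labels.
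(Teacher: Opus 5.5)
Your route differs from the paper's, and as written it has a genuine gap, though it is repairable.

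\textbf{What is right.} Your analysis of a single merged chain is correct. Interior old paths must have types $\{\alpha\alpha\}$, $\{\beta\beta\}$ or $\{\alpha\beta\}$. The $\{\alpha\beta\}$ ones can be inserted anywhere, while the other two must alternate. A chain whose two outer junction ends carry $\epsilon\in\{0,1,2\}$ labels $\alpha$ is realizable with $a$ interior $\{\alpha\alpha\}$ and $b$ interior $\{\beta\beta\}$ paths iff $a-b=1-\epsilon$.

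\textbf{The gap.} Your guesses fix only the number of chains of each (outer type, junction label) kind. They do not fix the numbers $A,B,C$ of old $\{\alpha\alpha\}$, $\{\beta\beta\}$, $\{\alpha\beta\}$ paths used as interior pieces, yet these enter $p^\mu_{\alpha\alpha},p^\mu_{\beta\beta},p^\mu_{\alpha\beta}$ directly. So ``$p^\nu_{ij}-(\text{consumed})+(\text{created})$'' is not determined by a guess. For example, take $p^\nu_{\alpha\beta}=3$ (all else $0$) and one guessed chain built from two $\{\alpha\beta\}$ paths: both $p^\mu_{\alpha\beta}=2$ and $p^\mu_{\alpha\beta}=1$ are feasible.

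Read literally, your test that ``the leftover $\{\alpha\alpha\}$ and $\{\beta\beta\}$ paths can be absorbed'' treats absorption as mandatory, whereas old paths may simply stay unmerged. With $p^\nu_{\alpha\alpha}=p^\nu_{\beta\beta}=1$ (all else $0$), the unchanged vector is feasible but would be rejected.

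Your normal-form claim is also false as stated. Moving a single $\{\alpha\alpha\}$ or $\{\beta\beta\}$ segment to another chain breaks the balance of both chains. Moreover, a chain with $\epsilon=2$ (resp.\ $\epsilon=0$) must keep at least one $\{\beta\beta\}$ (resp.\ $\{\alpha\alpha\}$) segment. Separately, irredundancy is not why interior vertices are irrelevant: they already have degree $2$ in their old path, which survives as a subpath.

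\textbf{A repair.} Guess $A,B,C$ as well (an extra factor $n^{3}$), and replace the absorption test by an exact aggregate condition. Let $N_0$ and $N_2$ be the numbers of guessed chains with $\epsilon=0$ and $\epsilon=2$. Require:
\begin{itemize}
\item $A-B=N_0-N_2$, $A\ge N_0$ and $B\ge N_2$;
\item at least one guessed chain whenever $A+B+C>0$;
\item for every type, outer plus interior usage at most the corresponding entry of $p^\nu$.
\end{itemize}
Necessity follows by summing the per-chain identities. For sufficiency, give each $\epsilon=0$ chain one $\{\alpha\alpha\}$ segment and each $\epsilon=2$ chain one $\{\beta\beta\}$ segment. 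Then insert the remaining equal numbers of $\{\alpha\alpha\}$ and $\{\beta\beta\}$ segments, as alternating pairs, together with all $C$ segments of type $\{\alpha\beta\}$, into any one chain. No exchange argument is needed, and the bound stays $n^{O(r^2)}$.

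\textbf{Comparison with the paper.} The paper avoids characterizing whole chains. It defines a single connecting operation on count vectors: merge an $\langle\alpha,a\rangle$-path with a distinct $\langle\beta,b\rangle$-path, which requires $p_{\alpha\beta}\ge 2$ when both are $\langle\alpha,\beta\rangle$-paths. By deleting join edges one at a time, it shows that the feasible vectors are exactly those reachable by such operations. It then enumerates them by breadth-first search over the $n^{O(r^2)}$ possible vectors. That route needs no balance analysis at all. Your repaired route instead yields an explicit arithmetic description of the feasible set.
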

\begin{proof}
A join node $\mu$ with a child $\nu$ adds edges between all vertices with label $\alpha$ and all vertices with label $\beta$ to the graph $G_\nu$.
Let $\mathbf{p}=(p_{ij})$ denote a path count vector. For labels $a,b\in [r]$, a \emph{connecting operation} decreases $p_{\alpha a}$ and $p_{\beta b}$ by $1$, increases $p_{ab}$ by $1$, and leaves all other entries unchanged; this corresponds to joining two distinct paths of types $\langle \alpha,a\rangle$ and $\langle \beta,b\rangle$ into one path of type $\langle a,b\rangle$. Such an operation is applicable if either $p_{\alpha\beta}\ge 2$ and $(a,b)=(\beta,\alpha)$, or $p_{\alpha a}\ge 1$ and $p_{\beta b}\ge 1$ with $(a,b)\neq(\beta,\alpha)$. This condition follows from the fact that, when connecting two paths by a join edge whose endpoints have labels $\alpha$ and $\beta$, the two paths are either two distinct $\langle \alpha,\beta\rangle$-paths, or one $\langle \alpha,a\rangle$-path and one $\langle \beta,b\rangle$-path.

We claim that a path count vector is feasible for the join node $\mu$ with respect to $(p_{11}^\nu,\dots,p_{rr}^\nu)$ if and only if it can be obtained from $(p_{11}^\nu,\dots,p_{rr}^\nu)$ by a finite sequence of connecting operations. 
The sufficient condition follows from the fact that each connecting operation corresponds to adding one join edge between the endpoints of two distinct paths, thereby merging them into a longer path while preserving vertex-disjointness. Hence any finite sequence of connecting operations produces a feasible path count vector.

For the necessary condition, let $\mathbf{p}$ be a feasible path count vector for the join node $\mu$ with respect to $(p_{11}^\nu,\dots,p_{rr}^\nu)$ and $\mathcal{P}$ be the corresponding set of vertex-disjoint paths in $G_\mu$. We use induction on the number of added edges used to merge paths at the join node. If this number is zero, then the path count vector is exactly $(p_{11}^\nu,\dots,p_{rr}^\nu)$. Otherwise, choose one added edge $\{u,v\}$ on a path $P$, where $u$ has label $\alpha$ and $v$ has label $\beta$. Since $\mathcal{P}$ is a set of paths, removing $\{u,v\}$ splits $P$ into two paths of types $\langle \alpha,a\rangle$ and $\langle \beta,b\rangle$ for some $a,b\in [r]$, which is the reverse of one connecting operation. Let $\mathcal{P}'$ be the set obtained from $\mathcal{P}$ by replacing $P$ with these two paths. Then $\mathcal{P}'$ is again a set of vertex-disjoint paths in $G_\mu$, and the corresponding path count vector is feasible with respect to $(p_{11}^\nu,\dots,p_{rr}^\nu)$, since it is obtained from the same set of paths represented by $(p_{11}^\nu,\dots,p_{rr}^\nu)$ in $G_\nu$ by using one fewer added join edge. The claim follows by induction, and therefore every feasible path count vector can be obtained from $(p_{11}^\nu,\dots,p_{rr}^\nu)$ by a finite sequence of connecting operations.

There are $O(r^2)$ entries in a path count vector, and each entry is an integer between $0$ and $n$. Hence, the total number of path count vectors is at most $n^{O(r^2)}$. Starting from $(p_{11}^\nu,\dots,p_{rr}^\nu)$, we enumerate all reachable vectors by breadth-first search over this state space. From each state, there are only $O(r^2)$ choices of labels $(a,b)$ for a connecting operation, and each applicability test and update takes polynomial time. Therefore, all feasible path count vectors can be enumerated in time $n^{O(r^2)}$.
\end{proof}

From the above argument, 
$\mathrm{DP}_\mu[*_\mu]=\texttt{true}$ if and only if there exist the state tuple $*_{\nu}$  such that $\mathrm{DP}_{\nu}[*_\nu]=\texttt{true}$ and

\begin{itemize}
    \item $x_i^\mu=x_i^\nu$ for $i\in [r]$,
    \item $y_i^\mu=y_i^\nu$ for $i\in [r]\setminus\{\alpha,\beta\}$,
    \item $z_i^\mu=z_i^\nu$ for $i\in [r]\setminus\{\alpha,\beta\}$,
    
    \item if $x_\alpha^\nu=0$ and $x_\beta^\nu=0$,
    \begin{itemize}
        \item $y_\alpha^\mu=y_\alpha^\nu$,
        \item $y_\beta^\mu=y_\beta^\nu$,
        \item $z_\alpha^\mu=z_\alpha^\nu$,
        \item $z_\beta^\mu=z_\beta^\nu$,
        \item $p_{ij}^\mu=p_{ij}^\nu$ for $i,j\in [r]$.
    \end{itemize}
    
    \item if $x_\alpha^\nu\neq0$ and $x_\beta^\nu=0$,
    \begin{itemize}
        \item $y_\alpha^\mu=y_\alpha^\nu$,
        \item $y_\beta^\mu=y_\beta^\nu+z_\beta^\nu$,
        \item $z_\alpha^\mu=z_\alpha^\nu$,
        \item $z_\beta^\mu=0$,
        \item $p_{ij}^\mu=p_{ij}^\nu$ for $i,j\in [r]$.
    \end{itemize}
    
    \item if $x_\alpha^\nu=0, x_\beta^\nu\neq0$
    \begin{itemize}
        \item $y_\alpha^\mu=y_\alpha^\nu+z_\alpha^\nu$
        \item $y_\beta^\mu=y_\beta^\nu$
        \item $z_\alpha^\mu=0$
        \item $z_\beta^\mu=z_\beta^\nu$ 
        \item $p_{ij}^\mu=p_{ij}^\nu$ for $i,j\in [r]$.
    \end{itemize}
    
    \item if $x_\alpha^\nu\neq0$ and $x_\beta^\nu\neq0$,
    \begin{itemize}
        \item $y_\alpha^\mu=y_\alpha^\nu+z_\alpha^\nu$,
        \item $y_\beta^\mu=y_\beta^\nu+z_\beta^\nu$,
        \item $z_\alpha^\mu=0$,
        \item $z_\beta^\mu=0$,
        \item $(p_{11}^\mu,\cdots,p_{rr}^\mu)$ is equal to one of the path count vectors enumerated by \Cref{lem:joindisjointpath} from $(p_{11}^\nu,\cdots,p_{rr}^\nu)$.        
    \end{itemize}
\end{itemize}

We now summarize the overall time complexity. The total running time is the sum of computations over all $O(n)$ nodes in the expression tree $\mathcal{T}$. The bottleneck is the update at a join node $\mu$. To compute the DP table for $\mu$, our algorithm iterates through each of the up to $n^{O(r^2)}$ valid states in the table of its child, $\nu$. For each such state, we perform the enumeration described in \Cref{lem:joindisjointpath}, which takes $n^{O(r^2)}$ time. Therefore, the time to compute the table for a single join node is $ n^{O(r^2)} \cdot n^{O(r^2)} = n^{O(r^2)}$.

The total complexity for the entire algorithm is thus $O(n) \cdot n^{O(r^2)}$, which simplifies to $n^{O(r^2)}$. By substituting $r = 2^{\cw(G)+1}+1$, this running time is $n^{2^{O(\cw(G))}}$. This is an XP algorithm parameterized by cliquewidth. Therefore, \Cref{thm:cw_XP} holds.

\bigskip

By applying \Cref{thm:cw_XP} for each $k,l$, we can also solve \textsc{Secluded $k$-Path} and \textsc{Short Secluded Path}.

\begin{corollary}\label{cor:cw_XP}
\textsc{Secluded $k$-Path} and \textsc{Short Secluded Path} can be solved in time $n^{2^{O(\cw)}}$.
\end{corollary}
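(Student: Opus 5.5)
The plan is to reduce the corollary to \Cref{thm:cw_XP} by iterating over the length and the number of neighbors. \Cref{thm:cw_XP} decides, for given $k$ and $l'$, whether $G$ has an $s$-$t$ path of length exactly $k$ with exactly $l'$ neighbors, in time $n^{2^{O(\cw)}}$. Both secluded problems are disjunctions of such exact instances, each determined by a pair of values in $[n]\times\{0,\dots,n\}$.

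For \textsc{Secluded $k$-Path} (length exactly $k$, at most $l$ neighbors), I run the algorithm of \Cref{thm:cw_XP} for the fixed $k$ and each $l'\in\{0,1,\dots,\min(l,n)\}$. I answer yes if and only if some run accepts. This is correct because an $l$-secluded path has $|N(V(P))|=l'$ for exactly one $l'\le l$, and conversely any accepted $l'\le l$ witnesses $l$-secludedness.

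For \textsc{Short Secluded Path} (length at most $k$), I additionally range over all $k'\in\{1,\dots,\min(k,n)\}$. The answer is the disjunction over all $O(n^2)$ pairs $(k',l')$.

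The running time is $O(n^2)\cdot n^{2^{O(\cw)}}=n^{2^{O(\cw)}}$. One subtlety is that the $(2^{\cw+1}-1)$-expression tree, with the two extra labels for $s$ and $t$, should be computed only once in $O(n^3)$ time and reused across all runs. Even recomputing it each time would be absorbed into the bound, so there is no real obstacle.

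The only point needing care is that the exact-count DP at the root uses $z$-sums equal to $n-k'-l'$, and these are consistent for each choice of $(k',l')$. That holds directly by the definition of the root condition.
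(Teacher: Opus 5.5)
Your proposal is correct and takes the same approach as the paper: apply \Cref{thm:cw_XP} to each pair of length and neighbor count and take the disjunction over the $O(n^2)$ pairs, with the polynomial overhead absorbed into $n^{2^{O(\cw)}}$. As a minor aside, the DP tables do not depend on $k$ or $l$, so a single DP run followed by checking all relevant root entries would also suffice.
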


\section{FPT Algorithm Parameterized by Neighborhood Diversity}\label{sec:nd}
In this section, we propose a fixed-parameter algorithm for \textsc{Short Secluded Path} parameterized by neighborhood diversity, which runs in $\nd^{O(\nd^2)}n^{O(1)}$ time.

\subsection{FPT algorithm for $s$-$t$ $k$-Path}
We first give an ILP formula with $O(\nd^2)$ variables for determining whether $G$ has an $s$-$t$ path of length $k$ passing through every module.

Let $\mathcal{G}$ be the quotient graph with respect to a twin partition $\mathcal{M}=\{M_1,M_2,\ldots,M_r\}$, where $r$ is the number of modules in the partition of $G$.
Suppose that $M_s=\{s\}$ and $M_t=\{t\}$ for $s,t\in V(G)$ are modules in $\mathcal{M}$, and $\mathcal{G}$ is connected.

We adopt an ILP formulation similar to the one used for \textsc{Hamiltonian Cycle} on graphs of bounded modular-width \cite[Lemma~6]{iwpec/GajarskyLO13}. 
We define the ILP instance \textup{(P)}
with variables $x_{ij},x_{ji}$ (for $\{M_i, M_j\} \in E(\mathcal{G})$), and $y_i$ (for $i\in [r]$) subject to the following constraints.

\begin{tcolorbox}[title=ILP instance \textup{(P)}]

\begin{enumerate}
\setcounter{enumi}{0}
    \item{
    $\sum_{i=1}^r y_i=k$
    }
\end{enumerate}

\begin{enumerate}
\setcounter{enumi}{1}
    \item
    \begin{enumerate}
    \item{
    $\sum_{j\in \lbrace l : M_l\in N_{\mathcal{G}}(M_i)\rbrace }x_{ij}=\sum_{j\in \lbrace l:M_l\in N_{\mathcal{G}}(M_i)\rbrace} x_{ji}$
    }
    \quad for every $M_i\in \mathcal{M}\setminus \{M_s,M_t\}$
    
    \item{
    $\sum_{j\in \lbrace l : M_l\in N_{\mathcal{G}}(M_s)\rbrace }x_{sj}=1$}
    \\
    {\large$\sum_{j\in \lbrace l:M_l\in N_{\mathcal{G}}(M_s)\rbrace} x_{js}=0$
    }
    
    \item{
    $\sum_{j\in \lbrace l:M_l\in N_{\mathcal{G}}(M_t)\rbrace} x_{jt}=1$
    }\\
    {
    $\sum_{j\in \lbrace l : M_l\in N_{\mathcal{G}}(M_t)\rbrace }x_{tj}=0$}
    \end{enumerate}
    
    \item
    \begin{enumerate}
    \item{
     $\sum_{j\in \lbrace l:M_l\in N_{\mathcal{G}}(M_i)\rbrace }x_{ij}=y_i$}
     \quad (if $M_i$ is an independent set)
    \item{
     $1\le\sum_{j\in \lbrace l:M_l\in N_{\mathcal{G}}(M_i)\rbrace }x_{ij}\le y_i$}
     \quad (if $M_i$ is a clique)
    \end{enumerate}

\end{enumerate}

For every partition of $V(\mathcal{G})$ into vertex sets $A$ and $V(\mathcal{G})\setminus A$:
\begin{enumerate}
\setcounter{enumi}{3}
    \item{
    $\sum_{1\le i< j\le r:\lbrace M_i,M_j \rbrace\in E(\mathcal{G})\land |\lbrace M_i,M_j \rbrace\cap A|=1}x_{ij}+x_{ji}\ge1$
    }
\end{enumerate}

 For every variable $x_{ij}$: 
\begin{enumerate}
\setcounter{enumi}{4}
    \item {
    $x_{ij}\ge 0$
    }
\end{enumerate}

 For every variable $y_{i}$: 
\begin{enumerate}
 \setcounter{enumi}{5}
    \item {
    $1\le y_i\le |M_i|$
    }
\end{enumerate}
\end{tcolorbox}


The core idea is based on the standard \emph{flow-conservation principle} commonly used in network flow problems. 
We imagine one unit of flow from a source module $M_s$ to a sink module $M_t$, being conserved at all intermediate modules.
The variables $x_{ij}$ and $x_{ji}$ represent the amounts of flows the $s$-$t$ path goes from $M_i$ to $M_j$ and from $M_j$ to $M_i$, respectively; these correspond to the number of times the $s$-$t$ path traverses from $M_i$ to $M_j$ and from $M_j$ to $M_i$. The integer variable $y_i$ denotes the number of vertices used within module $M_i$ that the path visits.


The constraint (1) guarantees that the size of a solution is exactly $k$.
The constraints (2) are the flow conservation constraints. Constraint (2.a) ensures that for any intermediate module (not a source or sink), the incoming flow equals the outgoing flow, implying that the number of incoming edges and the number of outgoing edges are equal in the intermediate modules. Constraints (2.b) and (2.c) define the source and sink, respectively: exactly one unit of flow leaves the source module $M_s$, and exactly one unit of flow enters the sink module $M_t$. Note that $M_s=\{s\}$ and $M_t=\{t\}$ by assumption.

The constraints (3) govern the relationship between the path passing through a module and the number of vertices used within it ($y_i$).
The constraint (3.a) implies that every incoming edge immediately goes out in an independent set module $M$.
Thus, the number of incoming edges is equal to the number of used vertices in $M$.
The constraint (3.b) implies that $P$ may pass through several vertices and then go out in a clique module $M$.
Thus, the number of incoming edges is at most the number of used vertices in $M$.

The constraint (4) is the connectivity condition, which ensures that a solution forms a single path. This disallows invalid solutions, such as a subgraph composed of a path and disjoint cycles.

The constraint (5) is a non-negativity constraint. The constraint (6) ensures that for each module, a solution contains at least one vertex in the module and no more vertices than the size of the module.

\begin{lemma}\label{lem:ILP}
The ILP instance \textup{(P)} is feasible if and only if there is an $s$-$t$ path in $G$ with $k$ vertices that includes at least one vertex from each module $M_i \in \mathcal{M}$.
\end{lemma}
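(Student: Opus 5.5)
We prove the two directions separately. The forward direction (path $\Rightarrow$ feasible ILP) is a direct counting argument. The reverse direction (feasible ILP $\Rightarrow$ path) uses an Eulerian-trail argument on an auxiliary multigraph, followed by an expansion step inside each module.

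\emph{Path to ILP.} Let $P=\langle s=v_1,\dots,v_k=t\rangle$ be an $s$-$t$ path with $k$ vertices that meets every module. Orient $P$ from $s$ to $t$. Set $y_i=|V(P)\cap M_i|$, and let $x_{ij}$ be the number of edges $(v_a,v_{a+1})$ of $P$ with $v_a\in M_i$ and $v_{a+1}\in M_j$, for $i\neq j$.

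Constraints (1), (5) and (6) hold immediately. For (2), every maximal subpath of $P$ lying inside one module is entered once and left once, except at $s$ and $t$. Since $M_s=\{s\}$ and $M_t=\{t\}$ are singletons, this gives (2.a)--(2.c).

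For (3), observe that in an independent module consecutive vertices of $P$ cannot both lie in the module. Hence each visited vertex of $M_i$ forms its own maximal segment, so the number of exits equals $y_i$. In a clique module the number of segments is at least $1$ (because $y_i\ge 1$) and at most $y_i$.

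For (4), note that $P$ visits every module and is connected. So for any bipartition $(A,V(\mathcal{G})\setminus A)$ with both sides nonempty, some edge of $P$ crosses it.

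\emph{ILP to path.} Take a feasible integral solution and build a directed multigraph $D$ on $\mathcal{M}$ containing $x_{ij}$ copies of the arc $(M_i,M_j)$. By (2), every vertex is balanced except $M_s$ (out-excess $1$) and $M_t$ (in-excess $1$). By (4) together with (6), every module is incident to some arc and the underlying graph of $D$ is connected. The standard Euler theorem then yields an Eulerian trail $W$ from $M_s$ to $M_t$ that uses every arc exactly once. The number of visits of $W$ to $M_i$ (counting the start and end) equals $\sum_j x_{ij}$, or $1$ for $M_s$ and $M_t$.

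Next we realise $W$ in $G$. An independent module $M_i$ is visited $\sum_j x_{ij}=y_i\le|M_i|$ times; assign each visit a distinct vertex of $M_i$. A clique module $M_i$ is visited $c_i=\sum_j x_{ij}$ times with $1\le c_i\le y_i\le|M_i|$. Choose $y_i$ distinct vertices of $M_i$ and split them into $c_i$ nonempty groups. Each group is traversed consecutively inside the clique during one visit.

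Consecutive visits of $W$ to modules $M_i$ and $M_j$ correspond to an edge $\{M_i,M_j\}$ of $\mathcal{G}$. Modules are complete to each other along such an edge, so the transition is an edge of $G$. All assigned vertices are distinct, so the result is a simple $s$-$t$ path. It has $\sum_i y_i=k$ vertices by (1) and meets every module by (6).

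The main obstacle is the connectivity step. We must check that (4), applied to cuts in the quotient graph, really forces the support of $D$ to be connected so that Euler's theorem applies. This is where the requirement $y_i\ge1$ is needed: it rules out modules that are isolated from $D$ and would otherwise make a cut constraint vacuous or violated.

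One edge case needs care: a clique module entered and left only once still needs $c_i\ge1$, which (3.b) guarantees. A related subtlety is that the flow-balance constraint (2.a) counts exits and entries. For an independent module, entries equal exits equal $y_i$, which is consistent with the vertex assignment above.
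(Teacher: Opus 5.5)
Your proposal is correct and follows the paper's proof essentially step for step: you count the arcs of $P$ between modules for the forward direction, and for the converse you take an Eulerian $M_s$--$M_t$ trail in the multigraph with $x_{ij}$ copies of $(M_i,M_j)$ and realise it in $G$ by splitting the $y_i$ chosen vertices of each module into $\sum_j x_{ij}$ consecutive groups. Your check of (4) over all nonempty proper bipartitions, using that $P$ meets every module, is more careful than the paper's appeal to $s$-$t$ separating cuts; however, like the paper, you must read (3) as excluding $M_t$, since there $\sum_j x_{tj}=0<1\le y_t$.
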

\begin{proof}
Suppose there is an $s$-$t$ path $P$ in $G$ with $k$ vertices that includes at least one vertex from each $M_i$. We can assume $P$ is a directed path from $s$ to $t$.
For every $\{M_i, M_j\} \in E(\mathcal{G})$, let $x_{ij}$ be the number of arcs $(u,v)$ in $P$ such that $u \in M_i$ and $v \in M_j$.
For every $i \in [r]$, let $y_i = |V(P) \cap M_i|$.
Then the size constraint (1) $\sum_{i=1}^r y_i=k$ holds.

Since $P$ is an $s$-$t$ path that visits each module, the flow conservation constraints (2) are satisfied. For any module $M_i$ other than $M_s$ and $M_t$, the number of edges entering $M_i$ from other modules must equal the number of edges leaving $M_i$. For $M_s$, one edge leaves, and none enter from other modules. For $M_t$, one edge enters, and none leave. Note that $M_s = \{s\}$ and $M_t=\{t\}$.

For constraint (3), if $M_i$ is an independent set, the path cannot use two consecutive vertices from $M_i$. Thus, for each vertex visited in $M_i\in \mathcal{M}\setminus \{M_s,M_t\}$, the path must enter and then immediately leave the module. So, the number of entries equals the number of vertices visited, $\sum_j x_{ij} = y_i$. If $M_i$ is a clique, the path can traverse multiple vertices within $M_i$ after entering once. Thus, the number of entries must be at least 1 (if $y_i > 0$) and at most $y_i$.

Constraint (4) is the cut constraint, ensuring the path is connected from $s$ to $t$. Since $P$ is a connected $s$-$t$ path, for any cut separating $s$ and $t$, at least one arc must cross the cut.

By their definitions, each $x_{ij}$ is non-negative, so constraint (5) holds. By assumption, $P$ visits at least one vertex in each module, and $y_i$ cannot exceed the module's size, so constraint (6) holds. Thus, \textup{(P)} is feasible.

For the reverse direction, suppose the ILP instance \textup{(P)} has a feasible solution $x_{ij}, y_i$. Let $\mathcal{G}'$ be the directed multigraph on the vertex set $\mathcal{M}$ where for each $\{M_i, M_j\} \in E(\mathcal{G})$, we add $x_{ij}$ arcs from $M_i$ to $M_j$ and $x_{ji}$ arcs from $M_j$ to $M_i$. Constraint (4) guarantees the connectivity of $\mathcal{G}'$. Constraint (2)  ensures that $\mathcal{G}'$ has an Eulerian trail from $M_s$ to $M_t$, that is, a directed walk from $M_s$ to $M_t$ that traverses every edge exactly once in  $\mathcal{G}'$. This trail defines an ordering $\pi$ of the arcs in $\mathcal{G}'$.

For each module $M_i$, let $n_i = \sum_{M_j \in N_{\mathcal{G}}(M_i)} x_{ij}$ be the number of times the trail $T$ leaves $M_i$. We construct a path in $G$ as follows. For each $M_i$, we select $y_i$ vertices to be on the path.
If $M_i$ is an independent set, constraint (3) implies $y_i = n_i$. We select any $y_i$ vertices and partition them into $n_i$ paths of length 1 (single vertices).
If $M_i$ is a clique, we select any $y_i$ vertices. We can partition these $y_i$ vertices into $n_i$ paths. For example, $n_i-1$ paths of length 1 and one path containing the remaining $y_i - n_i + 1$ vertices. This is possible since $M_i$ is a clique.

Let $\mathcal{P}_{M_i} = \{P_1^{(i)}, \dots, P_{n_i}^{(i)}\}$ be the set of these disjoint paths for each $M_i$. We stitch these paths together according to the Eulerian trail $T$. For each arc $a=(M_i, M_j)$ in $T$, we add an edge in $G$ connecting the endpoint of a path in $\mathcal{P}_{M_i}$ to the startpoint of a path in $\mathcal{P}_{M_j}$. Since $M_i$ and $M_j$ are adjacent in $\mathcal{G}$, such an edge exists between any vertex in $M_i$ and any vertex in $M_j$. We use each path in $\mathcal{P}_{M_i}$ exactly once as a source and once as a destination (except for $s$ and $t$). This process constructs a simple $s$-$t$ path $P$ in $G$.

The total number of vertices in $P$ is $\sum y_i$, which equals $k$ by constraint (1). By constraint (6), at least one vertex is chosen from each module. Thus, we have constructed the required $s$-$t$ $k$-path.
\end{proof}

From Lemma \ref{lem:ILP} and Theorem \ref{thm:ILP_proposition}, we obtain the following lemma.

\begin{lemma}\label{lem:st-k-path-fpt}
Given a twin partition with $r$ modules, the problem of determining whether there is an $s$-$t$ path of length $k$ passing through every module can be solved in $r^{O(r^2)}n^{O(1)}$ time.
\end{lemma}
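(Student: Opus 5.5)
The plan is to combine \Cref{lem:ILP} with \Cref{thm:ILP_proposition}. \Cref{lem:ILP} already shows that the question is equivalent to feasibility of the ILP instance \textup{(P)}. So what remains is to bound the number of variables and the encoding length of \textup{(P)}, and to check that \textup{(P)} can be written down within the claimed time.

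First, some preprocessing. A twin partition with $r$ modules is given. We may need to split off $s$ and $t$ as singleton modules, which yields at most $r+2$ modules. We may also restrict attention to the modules that the path must visit; by the statement this is all of them. If the quotient graph $\mathcal{G}$ is disconnected, we answer no immediately, since every module must be visited and the path is connected. Each module's type (clique or independent set) is determined in polynomial time.

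Next, the variable count. Instance \textup{(P)} has two variables $x_{ij},x_{ji}$ for each edge of $\mathcal{G}$ and one variable $y_i$ for each module, so $p=O(r^2)$.

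Then the number of constraints. Constraints (1), (2), (3), (5) and (6) contribute $O(r^2)$ inequalities, with equalities written as two inequalities. The cut constraints (4) range over all partitions of $V(\mathcal{G})$, which gives at most $2^{r}$ inequalities, each with $O(r^2)$ coefficients in $\{0,1\}$. The right-hand sides are bounded by $n$, through $k\le n$ and $|M_i|\le n$. The number of input bits is therefore
\[
L = O\bigl(2^{r}\cdot r^2\cdot \log n\bigr).
\]

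Finally, apply \Cref{thm:ILP_proposition}. With $p=O(r^2)$ this gives running time $p^{2.5p+o(p)}\cdot L = (r^2)^{O(r^2)}\cdot 2^{r} r^2\log n = r^{O(r^2)}\log n$. Building the instance costs $O(2^r r^2)+n^{O(1)}$, which is absorbed, so the total is $r^{O(r^2)}n^{O(1)}$. Correctness follows directly from \Cref{lem:ILP}.

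The only delicate point I expect is the exponential number of cut constraints. It is harmless here only because $2^{r}$ is dominated by $r^{O(r^2)}$ and enters $L$ linearly.
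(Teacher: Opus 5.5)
Your proposal is correct and is essentially the paper's proof: you invoke \Cref{lem:ILP}, count $p=O(r^2)$ variables and $O(2^r)$ cut constraints to bound the encoding length $L$, and apply \Cref{thm:ILP_proposition}; your bound $L=O(2^r r^2\log n)$ is slightly cruder than the paper's $O(r^2 2^r + r\log n)$, which does not matter. One quibble concerns the preprocessing: splitting off $s$ and $t$ is unnecessary, since the setup of \Cref{lem:ILP} already assumes $M_s=\{s\}$ and $M_t=\{t\}$, and doing it inside this lemma would not preserve equivalence because it would additionally force the path to visit $M\setminus\{s\}$ --- the paper instead splits first and then guesses the set of visited modules in \Cref{thm:nd_fpt:kpath}.
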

\begin{proof}
By Lemma \ref{lem:ILP}, the problem can be formulated as the $p$-ILP problem. The number of variables is $p = O(r^2)$. The number of constraints is dominated by the $O(2^r)$ cut constraints (Constraint 4).
Since the size of each module is at most $n$, the size of the input is $O(r^2 2^r + r\log n)$.
By Theorem~\ref{thm:ILP_proposition}, the problem can be solved in time $p^{O(p)} \cdot L = (r^2)^{O(r^2)} (r^2 2^r + r\log n) = r^{O(r^2)}n^{O(1)}$. 
\end{proof}

\subsection{FPT algorithm for \textsc{Short Secluded Path}}
Using \Cref{lem:st-k-path-fpt}, we present an FPT algorithm for \textsc{Secluded $k$-Path} and \textsc{Short Secluded Path}  parameterized by neighborhood diversity.
A key observation is that if a path visits a module $M$, then all vertices from this module not on the path are in the neighborhood of the path. Also, all the vertices in modules adjacent to $M$ are in the neighborhood of the path. Therefore, what we only have to do is to solve the problem of determining whether there is an $s$-$t$ path of length $k$ passing through every guessed module by using \Cref{lem:st-k-path-fpt}.

\begin{theorem}\label{thm:nd_fpt:kpath}
\textsc{Secluded $k$-Path} parameterized by neighborhood diversity $\nd(G)$ can be solved in $\nd(G)^{O(\nd(G)^2)}n^{O(1)}$ time.
\end{theorem}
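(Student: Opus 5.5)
We reduce \textsc{Secluded $k$-Path} to at most $2^{\nd(G)}$ instances of the problem handled by \Cref{lem:st-k-path-fpt}, one for each guess of which modules the path visits. First we compute, in linear time, a twin partition $\mathcal{M}$ with $\nd(G)$ modules. Then we refine it by splitting off $\{s\}$ and $\{t\}$ as singleton modules. Any subset of a module is still a set of twins, so this gives a twin partition $\mathcal{M}'=\{M_1,\dots,M_r\}$ with $r\le \nd(G)+2$ and $M_s=\{s\}$, $M_t=\{t\}$.

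Next we guess the set $\mathcal{S}\subseteq\mathcal{M}'$ of modules that the path meets. We require $M_s,M_t\in\mathcal{S}$, so there are at most $2^{r}$ guesses. Let $U=\bigcup_{M\in\mathcal{S}}M$. For a fixed guess we argue two things.

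\begin{itemize}
\item[(i)] The neighborhood size depends only on $\mathcal{S}$ and $k$. Take any $s$-$t$ path $P$ with $V(P)\subseteq U$ that meets every module of $\mathcal{S}$. A vertex outside $U$ lies in $N(V(P))$ if and only if its module is adjacent in the quotient graph to some module of $\mathcal{S}$, since adjacency between modules is all-or-nothing.
\begin{itemize}
\item[$\bullet$] In a clique module $M\in\mathcal{S}$, every vertex of $M\setminus V(P)$ is a neighbor of $P$.
\item[$\bullet$] In an independent module $M\in\mathcal{S}$, a vertex of $M\setminus V(P)$ has the same neighborhood as the vertices of $P$ in $M$. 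It is a neighbor of $P$ if and only if some module adjacent to $M$ lies in $\mathcal{S}$. This always holds when $|\mathcal{S}|\ge 2$, because $P$ is connected and must enter and leave $M$ through adjacent modules. The degenerate case is $s=t$, where $\mathcal{S}=\{M_s\}$ and the path is trivial; we handle it directly.
\end{itemize}
Every vertex of $U\setminus V(P)$ is therefore a neighbor, and the path uses exactly $k$ vertices of $U$. Hence
\[ |N(V(P))| = (|U|-k) + \Bigl|\bigcup\{M\notin\mathcal{S} : M \text{ adjacent to some module of } \mathcal{S}\}\Bigr|. \]
This is a quantity we can compute directly. If it is not at most $l$ (or not exactly $l$, for the exact variant), we discard the guess.
\item[(ii)] Existence. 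If the guess survives, we decide whether $G[U]$ contains an $s$-$t$ path with $k$ vertices passing through every module of $\mathcal{S}$. The set $\mathcal{S}$ is a twin partition of $G[U]$ with at most $r$ modules, and $\{s\}$ and $\{t\}$ are among them. We apply \Cref{lem:st-k-path-fpt}, or \Cref{lem:ILP} directly, to this instance. If the quotient graph of $G[U]$ is disconnected, the connectivity constraints (4) already make the ILP infeasible, which is the correct answer.
\end{itemize}

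Correctness follows because every solution path $P$ determines the guess $\mathcal{S}$ of modules it meets, and then lies in $G[U]$ and meets all of $\mathcal{S}$. Conversely, by (i), any path found for a surviving guess has the required neighborhood size.

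For the running time, there are $2^{O(\nd)}$ guesses. Each guess costs $r^{O(r^2)}n^{O(1)}$ for the ILP plus polynomial time for the counting, so the total is $\nd(G)^{O(\nd(G)^2)}n^{O(1)}$.

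The main obstacle is step (i): checking carefully that non-path vertices in visited modules really are neighbors in every case. The delicate cases are independent modules and the boundary situation around $s$ and $t$. The splitting of $\{s\}$ and $\{t\}$ into singleton modules is what keeps this counting exact.
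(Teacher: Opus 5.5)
Your proposal is correct and follows the paper's proof essentially step for step. Like the paper, you split off $\{s\}$ and $\{t\}$ as singleton modules, guess the set of visited modules, note that $|N(V(P))|$ is determined by the guess and $k$, and decide existence by applying \Cref{lem:st-k-path-fpt} to $G[U]$. Your explicit argument that unvisited vertices of a visited independent module are neighbors (the path neighbor of a path vertex in that module lies in an adjacent module of $\mathcal{S}$) fills in a point the paper only asserts. Your remark that disconnected guesses are rejected by the cut constraints, rather than by a separate discard step, is likewise a minor refinement.
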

\begin{proof}
First, we compute a twin partition for $G$ with $\nd(G)$ modules in linear time~\cite{Lampis12,McConnell,Tedder}.
We ensure that $s$ and $t$ are in their own singleton modules by splitting the modules containing them. If $s \in M$, we replace $M$ with $M\setminus \{s\}$ and add a new module $M_s=\{s\}$. We do the same for $t$. This results in a new twin partition $\mathcal{M}=\lbrace M_1,M_2,\ldots, M_{r} \rbrace$ with $r \le \nd(G)+2$.
The algorithm proceeds by guessing which modules $\mathcal{M}' \subseteq \mathcal{M}$ contain the vertices of a potential $l$-secluded $s$-$t$ path of length at most $k$. We must always include $M_s$ and $M_t$ in $\mathcal{M}'$. There are $O(2^{r-2}) = O(2^{\nd})$ possible choices for $\mathcal{M}'$. 

Let $r' = |\mathcal{M}'|$.
For each guess, we check if there exists a valid solution containing at least one vertex in each module in $\mathcal{M}'$ in the subgraph induced by the vertices of $\mathcal{M}'$. 
If $|r'|>k$ or the induced subgraph is not connected, we can discard this guess.

Let $\mathcal{N} \subseteq \mathcal{M} \setminus \mathcal{M}'$ be the set of modules adjacent to at least one module in $\mathcal{M}'$. For any $s$-$t$ path $P$ with $V(P) \subseteq \bigcup_{M \in \mathcal{M}'} M$, its neighborhood $N(V(P))$ consists of two parts: (1) all vertices in the modules of $\mathcal{N}$, and (2) all vertices in modules of $\mathcal{M}'$ that are not on the path $P$. 
Therefore, $N(V(P)) = (\bigcup_{M \in \mathcal{N}} M) \cup ((\bigcup_{M \in \mathcal{M}'} M) \setminus V(P))$.
The size of the neighborhood is $|N(V(P))| = |\bigcup_{M \in \mathcal{N}} M| + |\bigcup_{M \in \mathcal{M}'} M| - |V(P)|$.
Since $|V(P)|=k$, $|N(V(P))|$ is given by $|\bigcup_{M \in \mathcal{N}} M| + |\bigcup_{M \in \mathcal{M}'} M| - k$ for a fixed guess $\mathcal{M}'$ (which fixes $\mathcal{N}$). Thus, if $|N(V(P))|>l$, we discard such a guess.


Finally, we only have to determine whether there exists a path of length $k$
in the graph $G[\bigcup_{M' \in \mathcal{M}'} M']$ that visits every module in $\mathcal{M}'$.
By \Cref{lem:st-k-path-fpt}, this can be determined in time $r'^{O(r'^2)}n^{O(1)} = \nd^{O(\nd^2)}n^{O(1)}$.
We repeat this for all $O(2^{\nd})$ guesses of $\mathcal{M}'$. The total runtime is $2^{\nd}  \cdot \nd^{O(\nd^2)}n^{O(1)}=\nd^{O(\nd^2)}n^{O(1)}$.
\end{proof}

By applying Theorem~\ref{thm:nd_fpt:kpath} to \textsc{Secluded $k'$-Path} for $2\le k'\le k$, \textsc{Short Secluded Path} can also be solved in $\nd^{O(\nd^2)}n^{O(1)}$ time. 
\begin{corollary}\label{cor:SSP:nd}
For a graph $G$ of neighborhood diversity $\nd$, \textsc{Short Secluded Path} can be solved in $\nd^{O(\nd^2)}n^{O(1)}$ time.
\end{corollary}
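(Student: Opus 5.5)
The corollary follows by reducing \textsc{Short Secluded Path} to at most $k$ calls of the exact-length algorithm from \Cref{thm:nd_fpt:kpath}.

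First, an instance $(G,s,t,k,l)$ of \textsc{Short Secluded Path} is a yes-instance if and only if there is some $k'$ with $1\le k'\le k$ such that $(G,s,t,k',l)$ is a yes-instance of \textsc{Secluded $k$-Path} (with the length parameter set to $k'$). This holds because the length of any $s$-$t$ path (counted in vertices) is a single integer, and seclusion does not depend on $k$. For $s\neq t$ only $k'\ge 2$ is relevant. The degenerate case $s=t$ is handled directly: the trivial path $\langle s\rangle$ is a solution exactly when $|N(s)|\le l$.

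Second, I run the algorithm of \Cref{thm:nd_fpt:kpath} for each $k'\in\{2,\dots,\min(k,n)\}$. Values $k'>n$ can be ignored, since no path has more than $n$ vertices. I answer yes iff some call answers yes.

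One point needs care. The proof of \Cref{thm:nd_fpt:kpath} discards a guess $\mathcal{M}'$ when the neighborhood size $|\bigcup_{M\in\mathcal{N}}M|+|\bigcup_{M\in\mathcal{M}'}M|-k'$ exceeds $l$. This correctly tests $l$-secludedness, meaning at most $l$ neighbors, for a path of length exactly $k'$. So each call decides exactly the length-$k'$ question, and the equivalence above applies. Correctness of each call is taken from \Cref{thm:nd_fpt:kpath} and \Cref{lem:st-k-path-fpt}.

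For the running time, there are at most $n$ calls, each taking $\nd^{O(\nd^2)}n^{O(1)}$ time, so the total is $n\cdot\nd^{O(\nd^2)}n^{O(1)}=\nd^{O(\nd^2)}n^{O(1)}$. Nothing here is a real obstacle; the only subtle point is checking that the subroutine's seclusion test is the inequality ``at most $l$'' rather than equality, which the discard rule in the proof confirms.
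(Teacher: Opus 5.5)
Your proposal is correct and follows the paper's argument, which likewise applies \Cref{thm:nd_fpt:kpath} to \textsc{Secluded $k'$-Path} for each $2\le k'\le k$ and takes the disjunction of the answers. Your extra remarks are refinements the paper leaves implicit: capping $k'$ at $\min(k,n)$ keeps the number of calls polynomial even when $k$ is large, the case $s=t$ is handled separately, and the subroutine's test is indeed ``at most $l$''.
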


\section{FPT Algorithm Parameterized by Twin Cover Number}\label{sec:tc}

In this section, we design a $2^{O(\tc^2)}n^{O(1)}$-time algorithm for \textsc{Short Secluded Path} and \textsc{Secluded $k$-Path} parameterized by twin cover number.

\begin{theorem}\label{thm:tc_fpt}
\textsc{Secluded $k$-Path} can be solved in $2^{O(\tc(G)^2)}n^{O(1)}$ time.
\end{theorem}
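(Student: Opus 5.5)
First compute a minimum twin cover $X$ with $|X|=\tc(G)$; this takes $O(1.2738^{\tc}+\tc\cdot n)$ time. After adding $s$ and $t$ to $X$ we have $|X|\le \tc+2$. Every connected component $C$ of $G-X$ is a clique of true twins, so all vertices of $C$ have the same neighborhood $N_X(C)\subseteq X$. We group the components into \emph{types} by this neighborhood. That gives at most $2^{|X|}$ types, and within one type the components differ only in their sizes.

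Next we guess how the solution path $P$ meets $X$.
\begin{itemize}
\item Guess the set $X_P=V(P)\cap X$, containing $s$ and $t$.
\item Guess the order in which $P$ visits the vertices of $X_P$.
\item For each consecutive pair in that order, guess whether $P$ uses the direct edge between them (if it exists) or leaves $X$ and visits a single component $C$ of $G-X$ in between.
\end{itemize}
An excursion into $G-X$ must stay inside one component and enter and exit through vertices of $X_P$. Hence each such segment is a subpath of a clique $C$ whose type $T$ satisfies $\{u,v\}\subseteq N_X(C)=T$. For every gap we also guess the type $T$ of the component used. The number of guesses is at most $2^{\tc}\cdot \tc!\cdot (2^{\tc}+1)^{\tc}=2^{O(\tc^2)}$, and this count is where the $2^{O(\tc^2)}$ bound comes from.

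Now fix a guess. We must assign distinct components to the gaps, choose how many vertices $a_g\in[1,|C|]$ each gap uses from its component, and reach total length exactly $k$ with at most $l$ neighbors. The neighborhood is determined as follows.
\begin{itemize}
\item The vertices of $X\setminus X_P$ count as neighbors if they are adjacent to $X_P$ or to a used component; their adjacency to a used component depends only on its type.
\item Every component whose type meets $X_P$ lies entirely in $N(V(P))$, except for the vertices actually used on $P$.
\item Each used component contributes $|C|-a_g$ neighbors.
\end{itemize}
Summing, $|N(V(P))|$ equals $\sum_{C:\,N_X(C)\cap X_P\neq\emptyset}|C| - (k-|X_P|)$ plus the number of vertices of $X\setminus X_P$ adjacent to $X_P$ or to a used type. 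This quantity depends only on the guess, because every used type meets $X_P$. So the secludedness check is a simple count once the guess is fixed.

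What remains is to decide whether distinct components can be assigned to the gaps with $\sum_g a_g = k-|X_P|$ and $1\le a_g\le |C_g|$. Since the $a_g$ can be any value in that range, it suffices that the chosen components have total size at least $k-|X_P|-(\text{number of direct edges})$, counted appropriately, and at least the number of gaps. For each type with $q_T$ gaps, greedily pick the $q_T$ largest components of that type; if there are fewer than $q_T$, discard the guess. Because distinct types use disjoint sets of components, the greedy choice is optimal. Then check $\#\text{gaps}\le k-|X_P|\le$ the sum of the chosen sizes. Each guess takes polynomial time, which gives $2^{O(\tc^2)}n^{O(1)}$ overall.

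The main obstacle I expect is making the neighborhood count exactly right, in particular for vertices of $X\setminus X_P$ and for components that touch $X_P$ but are not used. I would verify carefully that nothing depends on which specific components are used beyond their types.
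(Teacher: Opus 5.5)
\textbf{Gap: you assume that distinct gaps must use distinct components.} Your overall scheme is the same as the paper's: guess the trace of $P$ on $X$ and its order, guess for each gap the $X$-neighborhood type of the component used, fill the gaps with the largest components of each type, and observe that $|N(V(P))|$ depends only on the guess. Your neighborhood count is correct, and it is cleaner than the paper's Claim on the neighborhood formula. The problem is the sentence ``We must assign distinct components to the gaps'' and the rule ``if there are fewer than $q_T$ components of type $T$, discard the guess.'' A simple path may leave $X$ into the \emph{same} clique component $C$ several times, entering and exiting through different vertices of $X_P$ and using disjoint vertices of $C$ each time.

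Here is a concrete counterexample. Let $C=\{c_1,c_2\}$ be an edge, make both $c_i$ adjacent to $s,x,t$, and attach two pendant leaves to each of $s,x,t$. Then $\{s,x,t\}$ is the unique minimum twin cover, so $X=\{s,x,t\}$. The only $s$-$t$ path on $k=5$ vertices is $s,c_1,x,c_2,t$ (up to swapping $c_1,c_2$), and it has exactly $6$ neighbors, so $l=6$ gives a yes-instance. Your algorithm nevertheless rejects every guess:
\begin{itemize}
\item For $X_P=\{s,t\}$, the single gap would need $3>|C|$ vertices.
\item For $X_P=\{s,x,t\}$, both gaps have type $\{s,x,t\}$, but only one component of that type exists, so the guess is discarded.
\end{itemize}

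The fix is what the paper does. Its clique-replacement claim reads ``or all $r$ cliques if $b\ge r$,'' and it rejects only when some $Y_i$ is selected more than $|Y_i|$ times. In your terms, a component $C$ may serve $j$ gaps provided $|C|\ge j$. Let type $T$ have $q_T$ gaps and component sizes $c_1\ge\dots\ge c_{r_T}$. The guess is feasible for $T$ iff $q_T\le\sum_i c_i$, and the attainable numbers of path vertices drawn from $T$ are exactly the integers in $[\,q_T,\ \sum_{i\le\min(q_T,r_T)}c_i\,]$. Summing these intervals over types gives the final length test. Your neighborhood formula is unaffected, since it never used distinctness.

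Two smaller points. First, the intermediate bound $k-|X_P|-(\text{number of direct edges})$ is wrong, because direct edges consume no vertices outside $X$; your final check $\#\text{gaps}\le k-|X_P|\le\sum(\text{chosen sizes})$ is the right one. Second, guessing $X_P$ together with its order, rather than only a set of edges inside $X$ as the paper does, has the side benefit of also capturing vertices of $X$ on $P$ that have no incident path edge inside $X$.
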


\begin{proof} 
For any graph $G$, a twin cover $S$ of size $\tc$ can be computed in $O(m+n\cdot\tc+{1.2738}^{\tc})$ time~\cite{dmtcs/Ganian15}.
Let $X := S \cup \{s,t\}$ and $\tau := |X| (=\tc+2)$.

 First, we guess the set of edges from $G[X]$ that are part of the secluded path. The number of edges in $G[X]$ is at most $\binom{\tau}{2} \le \tau^2$. Since a path on $\tau$ vertices has at most $\tau-1$ edges, we guess subsets of at most $\tau-1$ edges as candidates for partial solutions inside of $X$. The number of guessed subsets is bounded by ${\sum_{i=0}^{\tau-1}\binom{\tau^2}{i} = \tau^{O(\tau)}}$.

For each guess, we check if the subgraph formed by these edges consists of a collection of disjoint paths, if $s$ and $t$ are part of the subgraph because they must be endpoints of a path, and if the sum of the number of vertices in the guessed disjoint paths is at most $k$. This clearly can be done in polynomial time. If these conditions are not satisfied, we discard the current guess.

Let $P_1, P_2, \ldots, P_d$ be  sub-paths within $G[X]$ of a guessed partial solution, where $d \le \tau$. Let $X' = \bigcup_{i=1}^d V(P_i)$ be the set of vertices in these paths.

Next, we construct a full $s$-$t$ path by ordering these sub-paths and connecting them with cliques from $G[V\setminus X]$.
First, we guess the orderings of the sub-paths.
If $s$ and $t$ belong to the same sub-path $P_i$, then this must be the entire path. In this case, we check if $d=1$, $|X'| = |V(P_1)| = k$, and $|N(V(P_1))|\le l$. If so, $P_1$ is indeed an $l$-secluded $s$-$t$ path of length $k$, and otherwise we safely conclude that the guess is invalid.
If $s$ and $t$ are in different sub-paths, we fix the path containing $s$ at the beginning and the one containing $t$ at the end. We then guess the ordering of the remaining $d-2$ paths.
Since $d-2\le \tau$, the number of such orderings is $(d-2)! =$ $\tau^{O(\tau)}$.

For a fixed ordering of sub-paths, we need to connect them. Since $X$ is a separator and $G[V\setminus X]$ is a union of disjoint cliques, vertices in exactly one clique connect two sub-paths in $G[X]$.

Here, we categorize cliques in $G[V\setminus X]$ with respect to neighbors in $X$, and we define the sets of vertices $Y_1, \ldots, Y_q$ where vertices in $Y_i$ have the common neighbors in $X$, that is, $N(u)\cap X = N(v)\cap X$ for $u,v\in Y_i$. Note that $q \le 2^\tau$. By the definition of a twin cover, each connected component of $G[V\setminus X]$ is a clique and is fully contained in some $Y_i$. We say that a clique $C$ in $G[V\setminus X]$ is from $Y_i$ if $V(C)\subseteq Y_i$.

To connect the $d-1$ gaps between the ordered sub-paths, we guess which set $Y_j$ provides the connecting clique for each gap. Since there are $d-1 = O(\tau)$ gaps and $q=O(2^\tau)$ choices for each, the total number of ways to choose the sequence of $Y_j$'s is $q^{d-1}= 2^{O(\tau^2)}$.

For each such guess, we verify whether it is valid. If a chosen $Y_j$ cannot connect the corresponding sub-paths, we reject it.
We also calculate the minimum possible path length. Since we must use at least one vertex from a clique in each of the $d-1$ gaps, the length $p$ of an $s$-$t$ path constructed is at least $p \ge \sum^d_{i=1}|V(P_i)|+d-1$.
If $p>k$, we safely reject it.
Furthermore, if the number of times a set $Y_i$ is selected is more than $|Y_i|$, then we reject it because we cannot construct an $s$-$t$ path even by using all the vertices in $Y_i$.

Now, for each gap between two sub-paths, we must select a specific clique from the chosen $Y_i$. Then by the following claim, we can focus on at most $r$ largest cliques in each $Y_i$.

\begin{claim}\label{tc_clique}
Suppose that $Y_i$ contains $r$ cliques $C_1, \ldots, C_r$ ordered by non-increasing size ($|C_1| \ge \ldots \ge |C_r|$). Then if there exists an $s$-$t$ path $P$ that passes through $b$ cliques from $Y_i$, there exists an $s$-$t$ path $P'$ that passes through the $b$ largest cliques from $Y_i$, $C_1, \ldots, C_b$ (or all $r$ cliques if $b \ge r$), such that $|V(P)|=|V(P')|$ and $|N(V(P))|=|N(V(P'))|$.
\end{claim} 

\begin{claimproof}
Suppose that there exists a clique $C$ among the $b$ largest cliques from $Y_i$ not used by $P$. Without loss of generality, we assume that $C$ is the largest clique among the $b$ largest cliques from $Y_i$ not used by $P$. Then $P$ uses another clique from $Y_i$ of size at most $|C|$.
Let $P_1,\ldots,P_{d}$ be the ordered sub-paths of $P$ in $X$. When a clique from $Y_i$ connects $P_j$ and $P_{j+1}$, its vertices must be adjacent to the endpoints of $P_j$ and $P_{j+1}$. Since all vertices in $Y_i$ have the same neighborhood in $X$, any clique from $Y_i$ can make this connection.
To construct $P'$, we simply replace the clique $C'$ used by $P$ with $C$, that is, if $P$ passes through $c'$ vertices in $C'$ between $P_j$ and $P_{j+1}$, $P'$ passes through arbitrary $c'$ vertices in $C$ between $P_j$ and $P_{j+1}$. Since 
$|C|\ge |C'|\ge c'$, this replacement is possible.
It is clear that $|V(P')|=|V(P)|$ holds.

We show that $P'$ satisfies $|N(V(P))|=|N(V(P'))|$.
Since $V(P)\cap X=V(P')\cap X$ and any vertex in $Y_i$ has the same neighbors in $X$, it holds that $|N(V(P))\cap X|=|N(V(P'))\cap X|$.
The neighborhood outside $X$ depends on which $Y_j$ sets are used and how many vertices are taken from them. Since we only swap cliques within the same $Y_i$ and keep the number of vertices taken from $Y_i$, i.e., $|V(P) \cap Y_i| = |V(P') \cap Y_i|$ holds, we have $|N(V(P)\cap Y_j)| = |N(V(P')\cap Y_j)|$ for any $j$.

Then, the following equation holds:
\begin{align*}
|N(V(P))\setminus X|=\sum_{j=1}^{q} |N(V(P))\cap Y_j|=\sum_{j=1}^{q} |N(V(P'))\cap Y_j|=|N(V(P'))\setminus X|.
\end{align*}
Therefore, the total neighborhood size is preserved. By repeating this replacement, we can obtain an $s$-$t$ path $P'$ that passes through the $b$ largest cliques, $C_1, \ldots, C_b$ (or all $r$ cliques if $b \ge r$), such that $|V(P)|=|V(P')|$ and $|N(V(P))|=|N(V(P'))|$.
\end{claimproof}

Now we need to decide vertices to take from the selected cliques. Here, we observe that the vertex set $V$ can be partitioned into four parts: (1) the set $V(P)\cap X$ of vertices on $P$ inside $X$, (2) the set of vertices $\bigcup_{C\in \mathcal{C}^P}C$ in the selected cliques in $G[V\setminus X]$, (3) the set $N\left((V(P) \cap X)\cup \bigcup_{C\in \mathcal{C}^P}C\right)$ of neighbors of $P$ except for vertices in $\bigcup_{C\in \mathcal{C}^P}C$, and (4) The set $V\setminus V(P)\cup N(V(P))$ of the other vertices which lie outside both $X$ and $N(V(P))$ (see \Cref{fig:tc_fpt_claim_4_2}).
Then the following claim holds.

\begin{figure}[tbp]
\centering
\includegraphics[width=\linewidth]{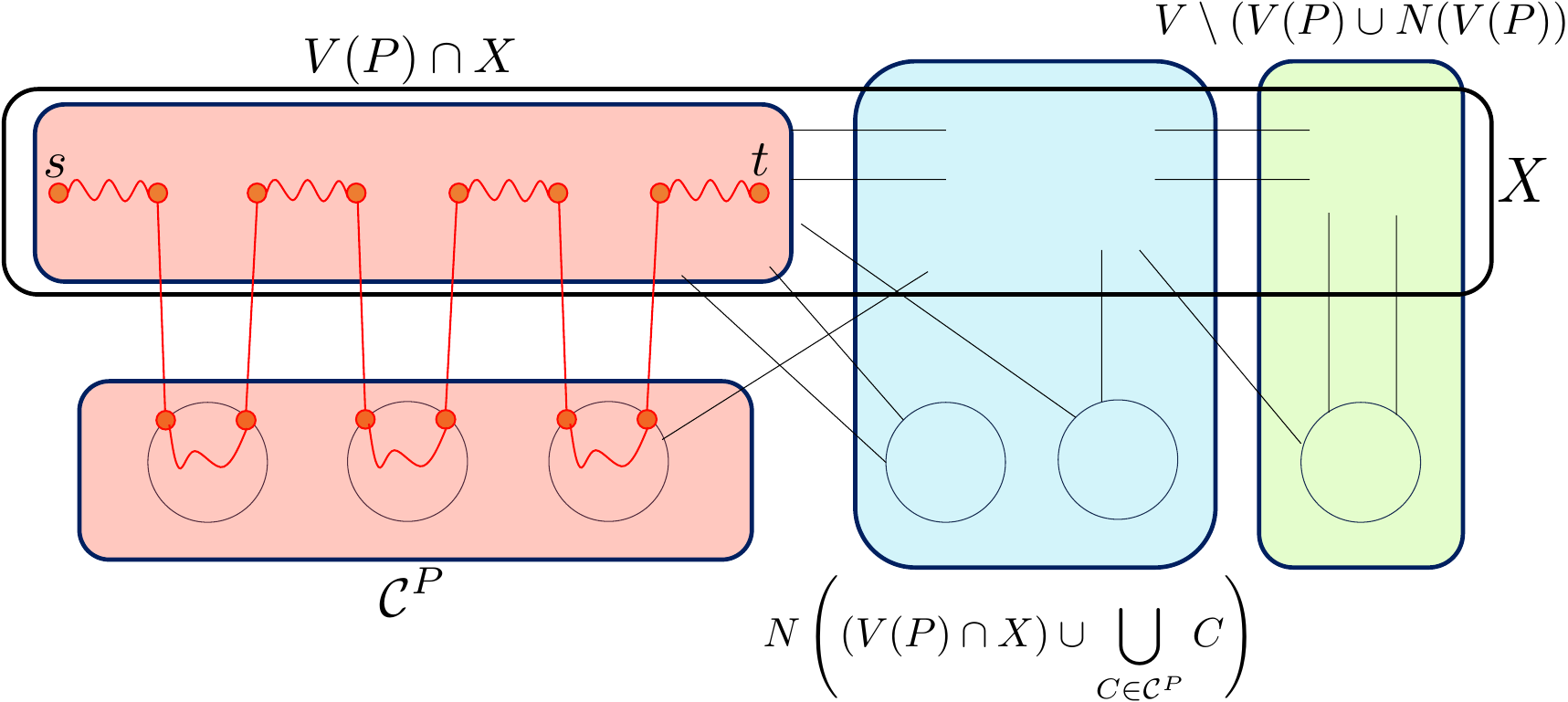}
\caption{An illustration of a partition of $V$ with respect to a path $P$ in Claim \ref{tc_vertex}.}
\label{fig:tc_fpt_claim_4_2}
\end{figure}

\begin{claim}\label{tc_vertex}
Let $P$ be an $s$-$t$ path and let $\mathcal{C}^P$ be the set of cliques in $V \setminus X$ that intersect with $V(P)$. The number of neighbors of the path $P$ is given by:
\begin{align}\label{eq:tc:neighbors}
    |N(V(P))|=\left|N\left((V(P) \cap X)\cup \bigcup_{C\in \mathcal{C}^P}C\right)\right|+\left|\bigcup_{C\in \mathcal{C}^P}C\right| + |V(P)\cap X|-|V(P)|.
\end{align}
\end{claim}
\begin{claimproof}
Since every vertex in $C\in\mathcal{C}^P$ has the same neighborhood in $X$ and $P$ only passes through cliques in $\mathcal{C}^P$, the following formula holds.
\begin{align*}
|N(V(P))|&=\left|N\left((V(P) \cap X)\cup\bigcup_{C\in\mathcal{C}^P}C\right)\right|+\left|\bigcup_{C\in\mathcal{C}^P}C\setminus V(P)\right|\\
&=
\left|N\left((V(P) \cap X)\cup \bigcup_{C\in \mathcal{C}^P}C\right)\right|+\left|\bigcup_{C\in \mathcal{C}^P}C\right|-\left|V(P) \cap \bigcup_{C\in \mathcal{C}^P}C\right|\\
&=\left|N\left((V(P) \cap X)\cup \bigcup_{C\in \mathcal{C}^P}C\right)\right|+\left|\bigcup_{C\in \mathcal{C}^P}C\right| + |V(P)\cap X|-|V(P)|.
\end{align*}
This holds as claimed.
\end{claimproof}

If we have fixed the sub-paths in $X$ and the cliques connecting them, we observe that the first three terms $\left|N\left((V(P) \cap X)\cup \bigcup_{C\in \mathcal{C}^P}C\right)\cap V(P)\right|$, $\left|\bigcup_{C\in \mathcal{C}^P}C\right|$, and $|V(P)\cap X|$ in the right-hand side of \Cref{eq:tc:neighbors} are determined. Note that $|V(P)\cap X|=|X'| = \sum_{i=1}^d |V(P_i)|$. Since $|V(P)|=k$, from Claim \ref{tc_vertex}, $|N(V(P))|$ is also determined.

After fixing the sub-paths in $X$ and the $Y_i$'s for connecting pairs of sub-paths, we can identify the set $\mathcal{C}$ of candidate cliques $\mathcal{C}$ (the largest ones, by Claim \ref{tc_clique}).
Since each clique in $Y_i$ have the same neighbors by the definition of a twin cover, if $|\bigcup_{C\in \mathcal{C}} C| = \sum_{C\in \mathcal{C}}|C|\ge k - |V(P)\cap X|$ is satisfied, we can obtain a path of length $k$ (we just need to arbitrarily choose $k - |V(P)\cap X|$ vertices from selected cliques in $\mathcal{C}$ such that at least one vertex is chosen from each selected clique). This is clearly done in polynomial time.

Finally, we analyze the running time. Guessing the sub-paths in $X$, the orderings of the sub-paths, and the $Y_i$'s for connecting pairs of sub-paths can be done in time $\tau^{O(\tau)}n^{O(1)}\cdot 2^{O(\tau^2)}n^{O(1)} =  2^{O(\tau^2)}n^{O(1)}$. After the guesses, we can check whether each guess is valid by verifying whether $\sum_{C\in \mathcal{C}}|C|\ge k - |V(P)\cap X|$ in polynomial time. Therefore, the total time is $2^{O(\tau^2)}n^{O(1)} = 2^{O(\tc^2)}n^{O(1)}$.
\end{proof}

\begin{corollary}\label{cor:SSP:tc}
   \textsc{Short Secluded Path} can be solved in $2^{O(\tc(G)^2)}n^{O(1)}$ time.
\end{corollary}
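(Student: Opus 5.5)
The plan is to reduce \textsc{Short Secluded Path} to a polynomial number of calls to the algorithm of \Cref{thm:tc_fpt}, in the same way \Cref{cor:SSP:nd} was derived from \Cref{thm:nd_fpt:kpath}. The reduction rests on a simple observation. An instance $(G,s,t,k,l)$ of \textsc{Short Secluded Path} is a yes-instance if and only if, for some integer $k'$ with $1\le k'\le \min\{k,n\}$, the instance $(G,s,t,k',l)$ of \textsc{Secluded $k$-Path} is a yes-instance. Here $k'=1$ is only relevant when $s=t$, and otherwise we may take $2\le k'$. The reason is that an $s$-$t$ path of length at most $k$ has some exact length $k'\le k$, and it has at most $l$ neighbors regardless of how its length is named. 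Conversely, a path of exact length $k'\le k$ is in particular of length at most $k$.

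The algorithm is as follows. First compute a twin cover once; this is not even necessary, since the procedure of \Cref{thm:tc_fpt} computes one itself. Then, for each $k'$ from $2$ to $\min\{k,n\}$, run the $2^{O(\tc(G)^2)}n^{O(1)}$-time algorithm of \Cref{thm:tc_fpt} on $(G,s,t,k',l)$. Answer yes if any of these calls answers yes. The parameter $\tc(G)$ is the same in every call, because the graph is unchanged. There are at most $n$ calls, so the total running time is $n\cdot 2^{O(\tc(G)^2)}n^{O(1)}=2^{O(\tc(G)^2)}n^{O(1)}$.

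One point needs checking: the secludedness condition in \Cref{thm:tc_fpt} must be ``at most $l$ neighbors'' rather than ``exactly $l$''. If the theorem only decides whether there is a path with exactly a prescribed number of neighbors, I would additionally loop over $l'\in\{0,\dots,l\}$. This costs a further factor of at most $n$ and does not change the bound. In fact the proof of \Cref{thm:tc_fpt} computes $|N(V(P))|$ exactly via Claim~\ref{tc_vertex} for every guess, so either comparison against $l$ is available. The main obstacle is therefore not the reduction itself, which is routine. It is making sure that the guarantees of \Cref{thm:tc_fpt} match the exact-length, bounded-neighborhood form used here, and the extra loop over $l'$ covers any mismatch.
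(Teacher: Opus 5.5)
Your proposal is correct and matches how the paper obtains this corollary: run the algorithm of \Cref{thm:tc_fpt} for every exact length $2\le k'\le k$, mirroring \Cref{cor:SSP:nd}, at the cost of only a factor of $n$. Your fallback loop over $l'$ is unnecessary, since \textsc{Secluded $k$-Path} is defined via $l$-secludedness and therefore already requires at most $l$ neighbors.
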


\section{Computational Complexity of \textsc{Shortest Secluded Path}}\label{sec:shortest}
In this section, we study the computational complexity of \textsc{Shortest Secluded Path}.

\subsection{Polynomial-time algorithm on unweighted graphs}

In this subsection, we present a polynomial-time algorithm for \textsc{Shortest Secluded Path} on unweighted graphs.

Let $k=\dist(s,t)$ be the distance between $s$ and $t$. For $i\in \{0,1,\ldots,k\}$, we define the layer $L_i = \{v\in V\mid \dist(s,v) = i$ and $\dist(v,t) = k-i\}$ as the set of vertices whose distance from $s$ is $i$ and from $t$ is $k-i$. These layers $L_0, \dots, L_k$ can be computed in linear time by performing a breadth-first search from $s$ and from $t$, respectively.
By definition, $L_0 = \{s\}$ and $L_k = \{t\}$. Let $L = \bigcup_i L_i$ and $R = V\setminus L$. It is easily seen that every vertex $v$ satisfying $\dist(s,v) + \dist(v,t)=k$ is contained in $L$, and any vertex in $R$ satisfies that $\dist(s,v) + \dist(v,t)>k$.
Thus, every shortest $s$-$t$ path is contained in $G[L]$. Then, the following lemmas hold.

\begin{lemma}\label{lem:L_i_neighbor}
For any integer $i \in \{0,1,\dots,k\}$, if a vertex in $L_i$ has a neighbor in $L$,
then that neighbor belongs to $L_{i-1}\cup L_i\cup L_{i+1}$.
Equivalently, since $N(L_i)$ excludes $L_i$ itself,
\[
N(L_i)\cap L \subseteq L_{i-1}\cup L_{i+1}.
\]
(Here, we define $L_{-1}=L_{k+1}=\emptyset$.)
\end{lemma}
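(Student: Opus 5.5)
The plan is to argue directly from the distance definitions of the layers, using only the triangle inequality for $\dist$ along a single edge. Fix $i$, a vertex $u\in L_i$, and a neighbor $w\in L$ of $u$. Since $w\in L$, there is a unique index $j$ with $w\in L_j$, so $\dist(s,w)=j$. The goal is to show $|i-j|\le 1$. Once this holds, $w\in L_{i-1}\cup L_i\cup L_{i+1}$, which is the first statement.

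The key step is the inequality $|\dist(s,u)-\dist(s,w)|\le 1$. This holds for adjacent vertices in an unweighted graph. A shortest $s$-$u$ path extended by the edge $\{u,w\}$ is an $s$-$w$ walk of length $\dist(s,u)+1$, so $\dist(s,w)\le \dist(s,u)+1$, and the same argument works with $u$ and $w$ swapped. With $\dist(s,u)=i$ and $\dist(s,w)=j$, this gives $|i-j|\le 1$, so $j\in\{i-1,i,i+1\}$.

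The boundary cases need a quick check. If $i=0$, then $j=-1$ is impossible because distances are nonnegative. If $i=k$, then $j=k+1$ is impossible because every layer index lies in $\{0,\dots,k\}$. Both are consistent with the convention $L_{-1}=L_{k+1}=\emptyset$.

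The ``equivalently'' form then follows by definition. $N(L_i)$ consists of vertices outside $L_i$ that are adjacent to some vertex of $L_i$. Intersecting with $L$ and removing $L_i$ from $L_{i-1}\cup L_i\cup L_{i+1}$ leaves $L_{i-1}\cup L_{i+1}$, since the layers are pairwise disjoint (each vertex has one value of $\dist(s,\cdot)$).

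There is no real obstacle here. The only point needing care is the small subtlety that edges inside $L_i$ are allowed, and the statement accounts for them by including $L_i$ in the first form.
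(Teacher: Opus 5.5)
Your proof is correct and follows essentially the same route as the paper: both apply the triangle inequality for the distance from $s$ across the single edge, in both directions, to get $|i-j|\le 1$, and then use that $N(L_i)$ excludes $L_i$ (the layers being pairwise disjoint) to obtain the second form. Your explicit boundary checks for $i=0$ and $i=k$ are a small addition that the paper leaves implicit.
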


\begin{proof}
    Let $v_i\in L_i$, and let $v_j\in L_j$ be a neighbor of $v_i$ with $v_j\notin L_i$.
    By the triangle inequality, we must have $\dist(s,v_i) \le \dist(s,v_j) + \dist(v_j,v_i)$, which means $i \le j+1$. Symmetrically, $j \le i+1$.
    Together, these imply that $j$ satisfies $|i-j| \le 1$.
    Since $N(L_i)$ excludes vertices of $L_i$ itself, it follows that
    $N(L_i)\cap L \subseteq L_{i-1}\cup L_{i+1}$.
\end{proof}

\begin{lemma}\label{lem:r_neighbor}
For any vertex $r \in R$ and any integer $i \in \{0, 1, \dots, k\}$, if $r$ has a neighbor in $L_i$, then its set of neighbors in $L$, denoted by $N(r) \cap L$, must satisfy one of the following three conditions:
(1) $N(r) \cap L \subseteq L_i$, (2) $N(r) \cap L \subseteq L_{i-1} \cup L_i$, and (3) $N(r) \cap L \subseteq L_i \cup L_{i+1}$.
\end{lemma}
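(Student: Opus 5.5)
The plan is to show that any vertex $r\in R$ has all its neighbors in $L$ confined to two consecutive layers. The three listed conditions together say exactly this: whenever $r$ has a neighbor in $L_i$, every neighbor of $r$ in $L$ lies in $L_{i-1}\cup L_i$ or in $L_i\cup L_{i+1}$.

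I would argue by contradiction, using only the triangle inequality and the definition of the layers, much as in \Cref{lem:L_i_neighbor}. Suppose $r$ has neighbors $u\in L_a$ and $w\in L_b$ with $a<b$. Since $u,w\in L$, we have $\dist(s,u)=a$ and $\dist(w,t)=k-b$. Then
\[
\dist(s,w)\le \dist(s,u)+2=a+2,\qquad \dist(s,u)\le \dist(s,w)+2 .
\]
More usefully, the walk $s\leadsto u, r, w\leadsto t$ has length $a+2+(k-b)$, and it passes through $r$. Hence
\[
\dist(s,r)+\dist(r,t)\le (a+1)+(k-b+1)=k+2-(b-a).
\]

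Because $r\in R$, we know $\dist(s,r)+\dist(r,t)>k$. Combining the two inequalities gives $b-a<2$, that is, $b-a\le 1$. So any two layers containing neighbors of $r$ have indices differing by at most one. Consequently the set of indices $\{j : N(r)\cap L_j\neq\emptyset\}$ has diameter at most $1$. If it contains $i$, it is therefore contained in $\{i\}$, $\{i-1,i\}$, or $\{i,i+1\}$, which are exactly conditions (1)--(3).

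There is no real obstacle here. The one point to check is that the strict inequality $\dist(s,r)+\dist(r,t)>k$ holds for every vertex of $R$, including vertices unreachable from $s$, whose distance is infinite. This follows from the observation stated just before \Cref{lem:L_i_neighbor}.
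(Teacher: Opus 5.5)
Your proof is correct and is essentially the paper's argument: the triangle-inequality bound $\dist(s,r)+\dist(r,t)\le k+2-(b-a)$, combined with $\dist(s,r)+\dist(r,t)>k$ for $r\in R$. The paper splits this into two claims (every neighbor in $L$ lies within one layer of $i$, and $r$ cannot have neighbors in both $L_{i-1}$ and $L_{i+1}$); both are special cases of your single pairwise bound $b-a\le 1$.
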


\begin{proof}
The proof consists of two main claims.

\begin{claim}\label{claim:poly:1}
    If a vertex $r \in R$ has a neighbor in $L_i$, then all of its other neighbors in $L$ must be in $L_{i-1} \cup L_i \cup L_{i+1}$.
\end{claim}
\begin{claimproof}
Let $v_i \in L_i$ be a neighbor of $r$. Let $v_j\in L_j$ be another neighbor of $r$. By definition of $R$, we know $\dist(s,r) + \dist(r,t) > k$.
Since $r$ is adjacent to $v_i$ and $v_j$, the triangle inequality gives us:
\begin{align*}
    \dist(s,r) &\le \dist(s,v_i) + \dist(v_i,r) = i+1, \\
    \dist(r,t) &\le \dist(r,v_j) + \dist(v_j,t) = 1 + (k-j) = k-j+1.
\end{align*}
Summing these inequalities yields $\dist(s,r) + \dist(r,t) \le k + i - j + 2$.
Combining this with $\dist(s,r) + \dist(r,t) > k$, we get $k < k + i - j + 2$, which simplifies to $j < i+2$.
By a symmetric argument, we obtain $i < j+2$.
Together, these inequalities imply $i-2 < j < i+2$, which means $j \in \{i-1, i, i+1\}$. Thus, any neighbor of $r$ in $L$ must be in $L_{i-1} \cup L_i \cup L_{i+1}$.
\end{claimproof}

\begin{claim}\label{claim:poly:2}
    A vertex $r \in R$ cannot have neighbors in both $L_{i-1}$ and $L_{i+1}$ simultaneously.
\end{claim}
\begin{claimproof}
Suppose for contradiction that $r \in R$ has a neighbor $v_{i-1} \in L_{i-1}$ and another neighbor $v_{i+1} \in L_{i+1}$.
Using the triangle inequality:
\begin{align*}
    \dist(s,r) &\le \dist(s,v_{i-1}) + \dist(v_{i-1},r) = (i-1)+1 = i, \\
    \dist(r,t) &\le \dist(r,v_{i+1}) + \dist(v_{i+1},t) = 1 + (k-(i+1)) = k-i.
\end{align*}
Summing these gives $\dist(s,r) + \dist(r,t) \le i + (k-i) = k$.
This contradicts the fact that $r \in R$, which requires $\dist(s,r) + \dist(r,t) > k$.
Therefore, $r$ cannot be adjacent to vertices in $L_{i-1}$ and $L_{i+1}$ simultaneously.
\end{claimproof}

\Cref{claim:poly:1} establishes that if $r$ has a neighbor in $L_i$, its neighbors in $L$ are restricted to at most three adjacent layers: $L_{i-1}, L_i, L_{i+1}$. \Cref{claim:poly:2} further proves that $r$ cannot have neighbors in both the layers, $L_{i-1}$ and $L_{i+1}$, simultaneously. Thus, the neighbors of $r$ in $L$ must be contained in one of the three listed combinations. This completes the proof of the lemma.
\end{proof}

From \Cref{lem:L_i_neighbor,lem:r_neighbor}, the neighbors of vertices in $L_i$ are in $L_{i-1} \cup L_i \cup L_{i+1}\cup R$. Moreover, the neighbors in $R$ are also adjacent to only $L_{i-1} \cup L_i \cup L_{i+1}\cup R$.
This locality is the key to our dynamic programming approach, as it ensures that when extending a path to layer $L_{i+1}$, the change in the neighborhood of the path only depends on the most recent layers.

Our dynamic programming table, $\mathrm{DP}$, stores values for pairs of adjacent vertices $(u_i, u_{i+1})$ where $u_i \in L_i$ and $u_{i+1} \in L_{i+1}$. The entry $\mathrm{DP}[u_i, u_{i+1}]$ is defined as the minimum size of the neighborhood of a shortest $s$-$u_{i+1}$ path that uses the edge $\{u_i, u_{i+1}\}$ as its last edge. If no such path exists, or if $\{u_i, u_{i+1}\} \notin E$, then $\mathrm{DP}[u_i, u_{i+1}] = \infty$.
The solution to the overall problem is then $\min_{u_{k-1} \in L_{k-1}} \mathrm{DP}[u_{k-1}, t]$.

For $i=0$ (the base case), we consider paths of length 2 from $s$ to a vertex $u_1 \in L_1$. Since $\dist(s,u_1)=1$, the edge $\{s,u_1\}$ must exist. The path consists of edge $\{s, u_1\}$. The size of its neighborhood is $|(N(s) \cup N(u_1)) \setminus \{s, u_1\}|$. Thus, we initialize $\mathrm{DP}[s,u_1] = |N(s)\cup N(u_1)|-2$.

For $1 \le i < k$, the recursive formula for $\mathrm{DP}[u_i, u_{i+1}]$ is computed as follows:
$$ \mathrm{DP}[u_i,u_{i+1}] =
   \begin{cases}
       \infty & \text{if } \{u_i, u_{i+1}\} \notin E, \\
       \displaystyle \min_{u_{i-1} \in L_{i-1}} \left\{ \mathrm{DP}[u_{i-1},u_i] + |N(u_{i+1}) \setminus (N(u_{i-1}) \cup N(u_i))| \right\} & \text{otherwise}.
   \end{cases}
$$
Here, $|N(u_{i+1}) \setminus (N(u_{i-1}) \cup N(u_i))|$ represents the number of new neighbors added when extending the path from $u_i$ to $u_{i+1}$. Due to the locality shown by \Cref{lem:L_i_neighbor,lem:r_neighbor}, $u_{i+1}$ never shares neighbors of $\bigcup_{j=0}^{i-2}L_j$. Thus, these new neighbors are precisely those in $N(u_{i+1})$ that are not already neighbors of the path ending in $\{u_{i-1}, u_i\}$. 

Finally,  we analyze the running time. The sets $L_i$ can be computed in $O(m+n)$ time.
The size of the DP table is $O(n^2)$ since the number of layers is at most $n$. For three vertices $u,v,w$, 
$|N(u)\cap(N(v)\cup N(w))|$ can be computed in polynomial time in advance.
Thus, each entry $\mathrm{DP}[u_{i},u_{i+1}]$ can be computed in $O(n)$ time from $\mathrm{DP}[u_{i-1},u_{i}]$.
Therefore, the total running time is $O(n^3)$.

\begin{theorem}\label{thm:unweighted:W[1]}
    \textsc{Shortest Secluded Path} can be solved in $O(n^3)$ time.
\end{theorem}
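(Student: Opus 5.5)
The plan is to prove the theorem by showing that the layered dynamic program above is correct and then bounding its running time. To keep the bookkeeping of path vertices clean, I would work with closed neighborhoods $N[v]=N(v)\cup\{v\}$. Every shortest $s$-$t$ path $P=\langle u_0=s,u_1,\dots,u_k=t\rangle$ has exactly $k+1$ vertices, and $u_i\in L_i$ for each $i$. Moreover $|N(V(P))| = |\bigcup_{i} N[u_i]| - (k+1)$, so minimizing the open neighborhood over shortest paths is the same as minimizing $|\bigcup_i N[u_i]|$. This removes the need to treat separately the fact that $u_{i+1}$ is itself a neighbor of $u_i$. I would therefore restate the recurrence with closed neighborhoods: the increment is $|N[u_{i+1}]\setminus(N[u_{i-1}]\cup N[u_i])|$ and the base case is $|N[s]\cup N[u_1]|$. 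The final answer subtracts $k+1$. This agrees with the paper's formulas up to the constant shift.

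The key structural step is a locality claim. For a shortest path $P$ as above and any $j\le i-2$, we have $N[u_{i+1}]\cap N[u_j]=\emptyset$. Suppose a common vertex $w$ exists. If $w\in L$, then by \Cref{lem:L_i_neighbor} (together with $u_j\in L_j$) we get $w\in L_{j-1}\cup L_j\cup L_{j+1}$, and also $w\in L_{i}\cup L_{i+1}\cup L_{i+2}$. These are disjoint because $j+1\le i-1$. If $w\in R$, then $w$ is adjacent to $u_j\in L_j$ and to $u_{i+1}\in L_{i+1}$. By \Cref{lem:r_neighbor}, its neighbors in $L$ lie in at most two consecutive layers, which is impossible since $(i+1)-j\ge 3$. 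Given this claim, inclusion--exclusion gives
\[
\Bigl|\bigcup_{j=0}^{i+1}N[u_j]\Bigr| = \Bigl|\bigcup_{j=0}^{i}N[u_j]\Bigr| + \bigl|N[u_{i+1}]\setminus(N[u_{i-1}]\cup N[u_i])\bigr| .
\]
So the cost of a shortest path is a sum of terms, each depending only on three consecutive vertices.

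Correctness of the DP then follows by induction on $i$. The invariant is that $\mathrm{DP}[u_i,u_{i+1}]$ equals the minimum of $|\bigcup_{j\le i+1}N[u_j]|$ over all shortest $s$-$u_{i+1}$ prefixes ending with the edge $\{u_i,u_{i+1}\}$. For this I need two facts. First, prefixes of shortest $s$-$t$ paths are exactly paths through $L_0,L_1,\dots$ using edges between consecutive layers. Second, the increment does not depend on anything before $u_{i-1}$. Both follow from the layer definition and the displayed identity. The case $k\le 1$ is trivial and handled directly.

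The main obstacle I expect is the $O(n^3)$ running-time bound, not correctness. The DP has one state per edge and each state minimizes over $u_{i-1}$, which gives $O(n^3)$ transitions. Each transition needs the value $|N[w]\setminus(N[u]\cup N[v])|$, which equals $|N[w]|-|N[w]\cap N[u]|-|N[w]\cap N[v]|+|N[w]\cap N[u]\cap N[v]|$. The pairwise intersections can all be precomputed in $O(n^3)$ time, for example via adjacency-matrix products. The triple term is the delicate part. My first attempt would be to use the locality claim: $u$ and $w$ are two layers apart, so every vertex of $N[u]\cap N[w]$ lies in $L_i$ or in $R$. Since $u_{i-1}$ and $u_{i+1}$ are both adjacent to $u_i$, I would then try to show that $N[u_{i-1}]\cap N[u_{i+1}]\subseteq N[u_i]$ in the relevant cases, which would make the triple term equal $|N[u_{i-1}]\cap N[u_{i+1}]|$. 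If that inclusion fails, I would fall back to a per-$u_i$ bitset computation and accept a slightly weaker bound.
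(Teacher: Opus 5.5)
Your correctness argument is the paper's layered DP, done more carefully. Because consecutive path vertices are adjacent, your increment $|N[u_{i+1}]\setminus(N[u_{i-1}]\cup N[u_i])|$ is literally equal to the paper's $|N(u_{i+1})\setminus(N(u_{i-1})\cup N(u_i))|$, and only the base case shifts by $2$. Your locality claim and induction are sound. The closed-neighbourhood bookkeeping even repairs the fact that the paper's DP values, as defined, overcount the prefix's neighbourhood by the layer index $i$; this is harmless for the minimiser.

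The gap is the $O(n^3)$ bound, which is exactly what the theorem asserts beyond polynomiality. Your route to it needs $N[u_{i-1}]\cap N[u_{i+1}]\subseteq N[u_i]$, and this is false. Take the $4$-cycle $s,a,t,b$ with $k=2$ and path $\langle s,a,t\rangle$. Then $b\in N[s]\cap N[t]$ but $b\notin N[a]$. Substituting $|N[s]\cap N[t]|=2$ for the true triple term $|N[s]\cap N[a]\cap N[t]|=1$ gives increment $1$ instead of $0$. Your fallback admittedly gives a weaker bound, so as written the proposal does not prove the statement.

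The paper does not supply the missing step either. It says the values $|N(u)\cap(N(v)\cup N(w))|$ ``can be computed in polynomial time in advance'' and then charges only the $O(n)$ minimisation per DP entry. So the precomputation over up to $\Theta(n^3)$ triples, naively $O(n^4)$, is not counted in its $O(n^3)$.

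What you can legitimately add is a sharper locality fact. By \Cref{lem:L_i_neighbor} and \Cref{claim:poly:2} (no vertex of $R$ has neighbours in both $L_{i-1}$ and $L_{i+1}$), $N[u_{i-1}]\cap N[u_{i+1}]\subseteq L_i$. Hence the triple term equals $1+|N(u_{i-1})\cap N(u_i)\cap N(u_{i+1})|$, where all these common neighbours lie in $L_i\cap N(u_i)$. If no layer induces an edge (e.g.\ $G$ bipartite), this term is identically $1$. Reading the pairwise counts off $(A+I)^2$ for the adjacency matrix $A$, your scheme then genuinely runs in $O(n^3)$.

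In general, however, a three-way common-neighbour count inside $L_i$ remains. Without a further idea for it, the bound that your argument (and the paper's) actually establishes is polynomial, e.g.\ $O(n^4)$, not $O(n^3)$.
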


\subsection{\textsc{Shortest Secluded Path} on weighted graphs}
In this subsection, we discuss \textsc{Shortest Secluded Path} on edge-weighted graphs. We assume that each edge weight is a positive integer.
We first show \textsc{Shortest Secluded Path} on weighted graphs is W[1]-hard with respect to the shortest path distance between $s$ and $t$ by a reduction from \textsc{Multicolored Clique}. Here, the shortest path distance between $s$ and $t$ is defined by the minimum weight of an $s$-$t$ path. 

\problemdef{\textsc{Multicolored Clique}} {A graph $G=(V_1\cup \cdots \cup V_k, E)$ with $n$ vertices where each $V_i$ forms an independent set.}
{Determine whether $G$ has a clique of size $k$.}
\textsc{Multicolored Clique} is W[1]-hard when parameterized by $k$ even on regular graphs~\cite{tcs/FellowsHRV09}.
\begin{theorem}\label{thm:weighted:W[1]}
    \textsc{Shortest Secluded Path} on edge-weighted graphs is W[1]-hard parameterized by the shortest path distance $d$ between $s$ and $t$ in the input graph.
\end{theorem}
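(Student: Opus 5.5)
The plan is a parameterized reduction from \textsc{Multicolored Clique} on regular graphs~\cite{tcs/FellowsHRV09}, using the decision version of \textsc{Shortest Secluded Path}: is there a shortest $s$-$t$ path with at most $l$ neighbors? Let $H=(V_1\cup\cdots\cup V_k,E_H)$ be $\delta$-regular with $n$ vertices. The idea is to make every shortest $s$-$t$ path correspond to picking exactly one vertex per color class. The path's neighborhood then splits into a part of fixed size and a part that counts the edges of $H$ incident to the chosen vertices, counted once per edge.

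\textbf{Construction.} Build $G'$ as follows.
\begin{itemize}
\item Take a copy of every $V_i$ as a layer.
\item Add $s$ adjacent to all of $V_1$ and $t$ adjacent to all of $V_k$.
\item Add all edges between $V_i$ and $V_{i+1}$ for $1\le i<k$.
\item Give all these edges weight $1$.
\item For every edge $e=\{u,v\}\in E_H$, add a new vertex $w_e$ adjacent to $u$ and $v$, with both edges of weight $W:=k+1$.
\end{itemize}
Set $d:=k+1$ and $l:=(n-k)+k\delta-\binom{k}{2}$. The reduction is polynomial, and the parameter $d=k+1$ depends only on $k$.

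\textbf{Step 1: shortest paths.} First I will check that $\dist(s,t)=k+1$ and that the shortest $s$-$t$ paths are exactly the paths $s,v_1,\dots,v_k,t$ with $v_i\in V_i$.
\begin{itemize}
\item Any path using some $w_e$ has weight at least $2W>k+1$, so it is not shortest.
\item Without the $w_e$'s, the graph is layered. Every $s$-$t$ path must cross each layer, and it has weight $k+1$ exactly when it visits one vertex per layer in order.
\end{itemize}

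\textbf{Step 2: neighborhood size.} Let $S=\{v_1,\dots,v_k\}$ be the chosen vertices of a shortest path $P$. I will compute $N(V(P))$.
\begin{itemize}
\item Every unchosen vertex of $V_i$ is adjacent to $v_{i-1}$ or $v_{i+1}$ (or to $s$ or $t$ at the ends). Hence all $n-k$ unchosen layer vertices are neighbors, which is a constant contribution. This uses that layers are nonempty and $k\ge 2$.
\item No other layer vertices or terminals contribute.
\item Among the edge vertices, $w_e$ is a neighbor iff $e$ is incident to $S$. Their number is $\sum_{v\in S}\deg_H(v)-|E_H(S)|=k\delta-|E_H(S)|$.
\end{itemize}
So $|N(V(P))|=(n-k)+k\delta-|E_H(S)|$, which is at most $l$ iff $|E_H(S)|\ge\binom{k}{2}$. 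Since $S$ has one vertex per color, this holds iff $S$ is a multicolored clique.

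\textbf{Main obstacle.} The delicate point is making sure the weighted gadget edges do not create new shortest paths or shortcuts. Choosing $W>(k+1)/2$ handles this. I also need regularity of $H$ so that the incident-edge count depends only on $|E_H(S)|$; this is why I use the regular-graph version of \textsc{Multicolored Clique}. Finally, I must confirm that the edge vertices $w_e$ are counted exactly once each, which holds because each is a single vertex. Together these give W[1]-hardness parameterized by $d$.
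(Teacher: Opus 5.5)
Your proof is correct and follows essentially the same reduction as the paper: \textsc{Multicolored Clique} on regular graphs, one vertex per edge of $H$ attached to its endpoints by edges of weight $k+1$, and the count $|N(V(P))|=(n-k)+k\delta-|E_H(S)|$. The only difference is cosmetic. The paper links consecutive color classes through connector vertices $w_h$ adjacent to all of $V_h\cup V_{h+1}$, giving $d=2k$, whereas you join consecutive classes completely, giving $d=k+1$; both work equally well.
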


\begin{figure}[htbp]
\centering
\includegraphics[scale=0.7]{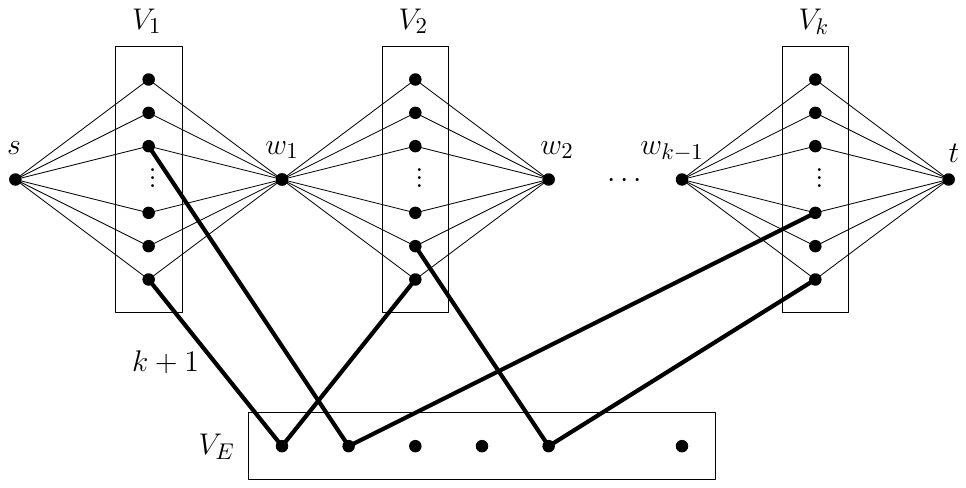}
\caption{The reduction from \textsc{Multicolored Clique} to \textsc{Shortest Secluded Path} in \Cref{thm:weighted:W[1]}. The weight of each thin edge is $1$ and the weight of each bold edge is $k+1$.}
\label{Reduction_MC_SSP}
\end{figure}

\begin{proof}
    We show the parameterized reduction from \textsc{Multicolored Clique} to \textsc{Shortest Secluded Path}.
    
    Let an $r$-regular graph $G=(V,E)$ be an instance of \textsc{Multicolored Clique}, where $V=V_1\cup V_2 \cup\cdots \cup V_k$.
    Then, we construct an equivalent instance $G'=(V',E',s,t)$ of \textsc{Shortest Secluded Path} on edge-weighted graphs.
    We first define $V' = V\cup \{s,t\}\cup V_E\cup W$ where $V_E = \{v_e\mid e\in E\}$ and $W = \{w_h\mid h\in [k-1]\}$.
    Then we connect $s$ to all vertices in $V_1$ and $t$ to all vertices in $V_k$ with edges of weight $1$. Moreover, for each $h\in [k-1]$, we connect $w_h$ to all vertices in $V_h\cup V_{h+1}$ with edges of weight $1$.
    Finally, for $e= \{u,v\}\in E$, we connect $v_e$ to $u$ and $v$ with edges of weight $k+1$. \Cref{Reduction_MC_SSP} illustrates the constructed graph $G'$.
    
    Clearly, $G'$ can be constructed in polynomial time. By the construction of $G'$, the shortest path distance between $s$ and $t$ is $2k$. 
    In the following, we show that there is a clique of size $k$ in $G$ if and only if there is a shortest $s$-$t$ path $P$ of weight $2k$ satisfying $|N(V'(P))|\le rk-\binom{k}{2}+n-k$ in $G'$, where $V'(P)$ is the set of vertices of $P$ in $G'$.
   
    Let $C=\{v_1,v_2,\ldots,v_k\}$ be a clique of size $k$ in $G$, where $v_i\in V_i$.
    We define an $s$-$t$ path $P=\langle s,v_1,w_1,v_2,w_2,\ldots,w_{k-1},v_k,t\rangle$ of weight $2k$.
    Since all edges in $G'[V'\setminus V_E]$ have weight $1$, the weight of $P$ is $2k$, and thus $P$ is a shortest $s$-$t$ path in $G'$.
   
    We next show that $P$ satisfies $|N(V'(P))|\le rk-\binom{k}{2}+n-k$.
    We observe that $V\setminus V'(P)\subseteq N(V'(P))$ holds.
    Since $|V'(P)\cap V|=k$, the neighborhood $N(V'(P))$ of $P$ includes $n-k$ vertices in $V\setminus V'(P)$.
    Furthermore, since $G$ is an $r$-regular graph, any vertex in $V$ has $r$ neighbors in $V_E$ in $G'$.
    Since $C$ is a clique in $G$, for $1\le i<j\le k$, any pair $v_i,v_j\in C$ has an edge $\{v_i,v_j\}$.
    Thus, $v_i$ and $v_j$ have a common neighbor $v_{\{v_i,v_j\}}\in V_E$.
    Therefore, $P$ has only $rk-\binom{k}{2}$ neighbors in $V_E$.
    Consequently, $|N(V'(P))|=rk-\binom{k}{2} + n-k$ holds.

    Conversely, suppose that there is a shortest $s$-$t$ path $P$ of weight $2k$ satisfying $|N(V'(P))|\le rk-\binom{k}{2}+n-k$ in $G'$.
    If $P$ contains any vertex in $V_E$, the weight of $P$ is at least $2k+2>2k$. Thus, $P$ does not contain any vertex in $V_E$. Hence, by the construction of $G'$, $P$ must pass through $w_1,\cdots, w_{k-1}$ and exactly one vertex in each $V_1,\cdots, V_k$. We also observe that $V\setminus V'(P)\subseteq N(V'(P))$ holds. Consequently, $|N(V'(P))\setminus V_E|=n-k$ and $P$ has at most $rk-\binom{k}{2}$ neighbors in $V_E$.
    
    Let $S=V'(P)\cap V$. Since every vertex in $S$ has degree $r$ in $G$, the total number of incidences between vertices of $S$ and vertices of $V_E$ is $rk$. A vertex in $V_E$ is counted twice in this total if and only if it corresponds to an edge whose both endpoints belong to $S$, that is, to an edge of $G[S]$.
    Since every vertex of $V_E$ is adjacent to at most two vertices of $V$ in $G'$, no vertex of $V_E$ can be counted more than twice. Therefore, the number of neighbors of $P$ in $V_E$ is exactly $rk-|E(G[S])|$.
    Thus, we have $rk-|E(G[S])|\le rk-\binom{k}{2}$, which implies that $|E(G[S])|\ge \binom{k}{2}$. Since $|S|=k$ and $S$ contains exactly one vertex from each independent set $V_1,\dots,V_k$, the graph $G[S]$ can have at most $\binom{k}{2}$ edges.
    Therefore, we get $|E(G[S])|=\binom{k}{2}$.
    This means that $S(=V'(P)\cap V)$ forms a clique of size $k$, which completes the proof.
\end{proof}

Finally, we present an XP algorithm for \textsc{Shortest Secluded Path} that runs in $n^{O(d)}$ time where $d$ is the shortest path distance between $s$ and $t$ in the input graph.

\begin{theorem}\label{thm:weighted:XP}
     \textsc{Shortest Secluded Path} on edge-weighted graphs can be solved in $n^{O(d)}$ time,  where $d$ is the shortest path distance between $s$ and $t$ in the input graph.
\end{theorem}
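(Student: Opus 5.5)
Since every edge weight is a positive integer, any $s$-$t$ path of total weight $d$ has at most $d$ edges, hence at most $d+1$ vertices. The plan is brute-force enumeration. We list all sequences $\langle s=v_0,v_1,\dots,v_p=t\rangle$ of distinct vertices with $p\le d$. There are at most $n^{d-1}$ choices for the internal vertices for each $p$, so $n^{O(d)}$ candidates in total.

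For each candidate sequence we carry out three checks, each in polynomial time:
\begin{enumerate}
\item Check that every consecutive pair $\{v_{i-1},v_i\}$ is an edge of $G$.
\item Check that the total weight of the sequence equals $d$; the value $d=\dist(s,t)$ is computed beforehand by Dijkstra's algorithm.
\item If both checks pass, compute $|N(V(P))|$ directly in $O(n+m)$ time.
\end{enumerate}
We output a valid candidate minimizing this neighborhood size. Correctness holds because every shortest $s$-$t$ path has at most $d$ edges and so appears in the enumeration. Conversely, every candidate passing the checks is a shortest $s$-$t$ path. Hence the minimum over the candidates is exactly the optimum of \textsc{Shortest Secluded Path}.

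The only point needing care is the edge-count bound. We use integrality and positivity, namely $w(e)\ge 1$ for every edge $e$, to conclude that a path of weight $d$ has at most $d$ edges. Without integrality no such bound holds, which is why the parameter is the weighted distance.

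The running time is $n^{O(d)}\cdot (n+m)^{O(1)}=n^{O(d)}$. A trivial case is $d=0$, which cannot occur when $s\neq t$. The enumeration can also be organized as a DFS from $s$ that only extends along edges and prunes partial paths whose weight exceeds $d$. This gives the same bound. There is no substantive obstacle here, and the W[1]-hardness in \Cref{thm:weighted:W[1]} shows that this XP bound is essentially the best one can hope for with respect to $d$.
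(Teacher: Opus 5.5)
Your proposal is correct and follows essentially the same route as the paper: compute $d$ by Dijkstra, use positive integer weights to bound the number of edges on any shortest $s$-$t$ path by $d$, enumerate all $n^{O(d)}$ candidate paths, and evaluate each neighborhood in polynomial time. Your explicit appeal to $w(e)\ge 1$ is slightly more careful than the paper's wording, which says ``nonnegative'' where positivity is what the edge-count bound actually needs.
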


\begin{proof}
Given an input graph $G=(V,E)$, we first compute the shortest path distance $d=\dist(s,t)$ between $s$ and $t$ by Dijkstra's algorithm~\cite{nm/Dijkstra59}.
Since the edge weights are nonnegative integers, any shortest $s$-$t$ path contains at most $d$ edges.
Thus, all shortest $s$-$t$ paths can be enumerated in $n^{O(d)}$ time.
Since the neighborhood of a shortest $s$-$t$ path can be computed in polynomial time, \textsc{Shortest Secluded Path} can be solved in $n^{O(d)}$ time.
\end{proof}

\section{Conclusion}
In this paper, we investigated the structural parameterizations of \textsc{Short Secluded Path} and the computational complexity of \textsc{Shortest Secluded Path}.

For the structural parameterizations of \textsc{Short Secluded Path}, we presented an XP algorithm parameterized by cliquewidth and FPT algorithms parameterized by neighborhood diversity and twin cover number, respectively.
Regarding \textsc{Shortest Secluded Path}, we first proposed a polynomial-time algorithm for unweighted graphs. For edge-weighted graphs, however, we proved that the problem is W[1]-hard but is in XP when parameterized by the shortest path distance between $s$ and $t$.

A natural future direction is to investigate the parameterized complexity of \textsc{Short Secluded Path} with respect to other structural parameters, such as the cluster vertex deletion number and modular-width. Furthermore, improving the running times of our algorithms is an interesting challenge. In particular, given an $r$-expression tree, our XP algorithm parameterized by cliquewidth runs in $n^{O(r^2)}$ time. It is worth considering whether this running time can be improved to $n^{O(r)}$ as in the case of \textsc{Hamiltonian Cycle}~\cite{algorithmica/BergougnouxKK20}. It would also be interesting to design a single-exponential FPT algorithm parameterized by the twin cover number, since our current algorithm runs in $2^{O(\tc(G)^2)}n^{O(1)}$ time.

\bibliographystyle{plain}
\bibliography{bibliography.bib}

\end{document}